\def\levy{L\'{e}vy}
\newcommand{\bbR}{\mathbb R}
\newcommand{\testfunction}{\ensuremath{\mathcal{B}_b}}
\newtheorem{theorem}{Theorem}[section]
\newtheorem{lem}{Lemma}[section]
\newtheorem{rem}{Remark}[section]
\newtheorem{prop}{Proposition}[section]
\newcounter{hypA}
\newenvironment{hypA}{\refstepcounter{hypA}\begin{itemize}
  \item[({\bf A\arabic{hypA}})]}{\end{itemize}}
\newcounter{hypB}
\newcommand{\Exp}{\mathbb{E}}
\date{}
\begin{document}

\begin{center}

{\Large \textbf{Biased Online Parameter Inference for State-Space Models}}

\vspace{0.5cm}

BY YAN ZHOU \& AJAY JASRA 

{\footnotesize Department of Statistics \& Applied Probability,
National University of Singapore, Singapore, 117546, SG.}\\
{\footnotesize E-Mail:\,}\texttt{\emph{\footnotesize stazhou@nus.edu.sg, staja@nus.edu.sg}}\\
\end{center}

\begin{abstract}
We consider Bayesian online static parameter estimation for state-space models. This is a very important problem, but is very computationally challenging
as the state-of-the art methods that are exact, often have a computational cost that grows with the time parameter; perhaps the most successful algorithm
is that of SM$\textrm{C}^{2}$ \cite{chopin}. We present a version of the SM$\textrm{C}^{2}$ algorithm which has computational cost that does not grow with the time parameter.
In addition, under assumptions, the algorithm is shown to provide consistent estimates of expectations w.r.t.~the posterior. However, the cost to achieve this consistency can be exponential
in the dimension of the parameter space; if this exponential cost is avoided, typically the algorithm is biased.
The bias is investigated from a theoretical perspective and, under assumptions, we find that the bias does not accumulate as the time parameter grows. 
The algorithm is implemented on several Bayesian statistical models.\\
\textbf{Keywords}: State-Space Models; Bayesian Inference; Sequential Monte Carlo.
\end{abstract}

\section{Introduction}

We consider a state-space models, that is, a pair of discrete-time stochastic
processes, $\left\{  X_{n}\right\}  _{n\geq 0}$ and $\left\{
Y_{n}\right\} _{n\geq 1}$. The hidden process
$\left\{  X_{n}\right\}_{n\geq 0} $ is a\ Markov chain and the joint density for $n$ observations of the hidden process and the observations is
$$
\nu_{\theta}(x_0)\prod_{k=1}^n f_{\theta}(x_k|x_{k-1}) g_{\theta}(y_k|x_k)
$$
where $x_k\in\mathsf{X}$, $k\geq 0$,  and $y_k\in\mathsf{Y}$, $k\geq 1$, $\theta\in\Theta\subseteq\mathbb{R}^d$
is a static parameter with prior $\pi_{\theta}$. In particular, it is of interest to infer the posterior on $(\theta,x_{0:n})$, conditional upon $(y_1,\dots,y_n)$
 ($x_{0:n}=(x_0,\dots,x_n)$) as the time parameter $n$ grows. This problem is of interest in a wide variety of applications,
including econometrics, finance and engineering; see for instance \cite{cappe}.

In general, even if $\theta$ is fixed, the posterior cannot be computed exactly and one often has to resort to numerical methods, for example by using sequential Monte Carlo (SMC) (see e.g.~\cite{doucet}). SMC  makes use of  a collection of proposal densities and sequentially simulates from these $N>1$ samples, termed particles. In most scenarios
it is not possible to use the distribution of interest as a proposal. Therefore, one must correct for the discrepancy between proposal
and target  via importance weights. In the majority of cases of practical interest, the variance of these importance weights increases with algorithmic time. This can, to some extent, be dealt with via a  resampling procedure consisting of sampling with replacement from the current weighted samples and resetting them to $1/N$.
However, as is well known in the literature, due to the path degeneracy problem for particle filters, when $\theta$ is a random variable SMC methods do not always work well; see the review of \cite{kantas} for details.
This has lead to a wide variety of techniques being developed, including \cite{chopin,dan,fearnhead,gilks,polson}; see \cite{kantas} for a full review.

The method which one might consider to be the state-of-the-art for the  Bayesian online static parameter estimation for state-space models, is that in \cite{chopin}. This approach
combines the methods of SMC samplers \cite{delm:06} and particle Markov chain Monte Carlo (PMCMC) \cite{Andrieu_2010}. The method provides consistent estimates (that is, as the number of samples grows) of expectations w.r.t.~the 
posterior on $(\theta,x_{0:n})$, as the time parameter grows. The method has been shown to work well in practice, but has one major issue; the computational cost of the application of PMCMC kernels grows as the time parameter grows;
whilst the amount of times that the application of the kernel may stablize, one will still need to apply the kernel during the algorithm (although a variety of tricks can be used to reduce the cost; see \cite{jacob}).
We present a version of the SM$\textrm{C}^{2}$ algorithm which has computational cost that does not grow with the time parameter.
In addition, under assumptions, the algorithm is shown to provide consistent estimates of expectations w.r.t.~the posterior. However, the cost to achieve this consistency can be exponential
in the dimension of the parameter space; if this exponential cost is avoided, typically the algorithm is biased.
The bias is investigated from a theoretical perspective and, under assumptions, we find that the bias does not accumulate as the time parameter grows. 

Our approach is based upon breaking up the observations into blocks of $T$ observations; an approach used, differently, in other articles such as \cite{chan,dan,gilks}.
The idea is simply to develop a `perfect' SM$\textrm{C}^{2}$ algorithm which has computational cost that does not grow with the time parameter; in practice, one
cannot implement this algorithm and so one must approximate it. Our approach simply uses the approximation of an appropriate target from the previous block.
It is remarked that if the posterior exhibits concentration/Bernstein-Von Mises properties as the time parameter grows, then alternative schemes are relevant, which could be prefered to the ideas here; however, such properties
do not always hold. See for instance the work of \cite{douc} in the context of MLE for relatively weak conditions; the properties of the MLE can impact on the concentration/Bernstein-Von Mises properties of the posterior (see e.g.~\cite{borwan}).

In Section \ref{sec:algo}, SM$\textrm{C}^{2}$ is reviewed and our algorithm is presented.
In Section \ref{sec:theory} our theoretical results are presented, with proofs housed in the appendix.
In Section \ref{sec:simos} our simulation results are presented. In Section \ref{sec:summary} the article is concluded
and several possible extensions are discussed.

\section{Algorithm}\label{sec:algo}

\subsection{Notations}

Let $(E,\mathcal{E})$ be a measurable space.
The notation $\testfunction(E)$ denotes the class of bounded and measurable real-valued functions. 
$\mathcal{C}_b(E)$ dentes the  continuous, bounded measurable real-valued functions on $E$.
The supremum norm is written as $\|f\|_{\infty} = \sup_{x\in E}|f(x)|$. $\mathscr{P}(E)$ is the set of probability measures on $(E,\mathcal{E})$. We will consider non-negative operators $K : E \times \mathcal{E} \rightarrow \bbR_+$ such that for each $x \in E$ the mapping $A \mapsto K(x, A)$ is a finite non-negative measure on $\mathcal{E}$ and for each $A \in \mathcal{E}$ the function $x \mapsto K(x, A)$ is measurable; the kernel $K$ is Markovian if $K(x, dy)$ is a probability measure for every $x \in E$.
For a finite measure $\mu$ on $(E,\mathcal{E})$, real-valued and measurable $f:E\rightarrow\mathbb{R}$
\begin{equation*}
    \mu K  : A \mapsto \int K(x, A) \, \mu(dx)\ ;\quad 
    K f :  x \mapsto \int f(y) \, K(x, dy).
\end{equation*}
We also write $\mu(f) = \int f(x) \mu(dx)$. $\|\cdot\|_{\textrm{tv}}$ denotes the total variation distance.


\subsection{SM$\textrm{C}^2$}\label{sec:smc2}

SMC$^2$ uses SMC to sample the following sequence of targets, for a fixed $N_x\geq 1$
$$
\pi_n(\xi_n) \propto \pi_{\theta}(\theta)\Psi_{\theta}(x_{0:n}^{1:N_x},a_{1:n}^{1:N_x})
\Big(\prod_{k=1}^n \frac{1}{N_x}\sum_{i=1}^{N_x} g_{\theta}(y_k|x_k^i)\Big)
$$
with $\xi_n=(\theta,x_{0:n}^{1:N_x},a_{1:n}^{1:N_x})\in\Theta\times\mathsf{X}^{(n+1)N_x}
\times \{1,\dots,N_x\}^{n}=\mathsf{E}_n$, where
$$
\Psi_{\theta}(x_{0:n}^{1:N_x},a_{1:n}^{1:N_x}) = \Big(\prod_{i=1}^{N_x}\nu_{\theta}(x_0^i)\Big)\prod_{k=1}^n \Big(\prod_{i=1}^{N_x}\frac{g_{\theta}(y_{k-1}|x_{k-1}^{a_k^i})}
{\sum_{j=1}^{N_x}g_{\theta}(y_{k-1}|x_{k-1}^{j})} f_{\theta}(x_k^i|x_{k-1}^{a_k^i})
\Big)
$$
with $g(y_0|x) = 1$ for any $x\in\mathsf{X}$. Throughout, we write the normalizing constant
of $\pi_n$ as $Z_n$.
We make the following defintion for notational convenience in the sequel:
\begin{eqnarray}
\beta_k(\theta,x_{k-1:k}^{1:N_x},a_k^{1:N_x}) & = & \Big(\prod_{i=1}^{N_x}\frac{g_{\theta}(y_{k-1}|x_{k-1}^{a_k^i})}
{\sum_{j=1}^{N_x}g_{\theta}(y_{k-1}|x_{k-1}^{j})} f_{\theta}(x_k^i|x_{k-1}^{a_k^i})
\Big)\Big(\frac{1}{N_x}\sum_{i=1}^{N_x} g_{\theta}(y_k|x_k^i)\Big)\label{eq:beta_def} \\
\check{\beta}_k(\theta,x_{k-1:k}^{1:N_x},a_k^{1:N_x}) & = & \Big(\prod_{i=1}^{N_x}\frac{g_{\theta}(y_{k-1}|x_{k-1}^{a_k^i})}
{\sum_{j=1}^{N_x}g_{\theta}(y_{k-1}|x_{k-1}^{j})} f_{\theta}(x_k^i|x_{k-1}^{a_k^i})
\Big).\label{eq:beta_def1}
\end{eqnarray}

We provide a Feynman-Kac representation of the SMC$^2$ algorithm which facilitates theoretical analysis and will assist our subsequent notations. 
Let $\alpha_k=(\theta,x_{0:k+1}^{1:N_x},a_{1:k+1}^{N_x})$ and set
$$
\eta_0(\alpha_0) = \pi_{\theta}(\theta)\Big(\prod_{i=1}^{N_x}\nu_{\theta}(x_0^i)\Big)\Big(\prod_{i=1}^{N_x}\frac{g_{\theta}(y_{0}|x_{0}^{a_1^i})}
{\sum_{j=1}^{N_x}g_{\theta}(y_{0}|x_{0}^{j})}\Big) f_{\theta}(x_1^i|x_{0}^{a_1^i})
\Big) 
$$
and for any $k\geq 1$
$$
M_k(\alpha_{k-1},d\alpha_k) = \int_{\mathsf{E}_{k+1}} \bar{M}_k(\alpha_{k-1},d\bar{\alpha}_k) \Big(\prod_{i=1}^{N_x}\frac{g_{\theta}(y_{k}|x_{k}^{a_k^i})}
{\sum_{j=1}^{N_x}g_{\theta}(y_{k}|x_{k}^{j})}f_{\theta}(x_{k+1}^i|x_{k}^{a_{k+1}^i})\delta_{\bar{\alpha}_k}(d\xi_k)
\Big)
$$
where $\bar{M}_k$ is $\pi_k-$invariant (i.e.~typically a particle marginal Metropolis-Hastings (PMMH) kernel \cite{Andrieu_2010}). Then we set
$$
G_k(\alpha_k) = \frac{1}{N_x}\sum_{i=1}^{N_x} g_{\theta}(y_{k+1}|x_{k+1}^i).
$$
and
$$
\gamma_n(\varphi) = \int \varphi(\alpha_n)\prod_{p=0}^{n-1} G_p(\alpha_k) \eta_0(\alpha_0)\prod_{k=1}^n M_k(\alpha_{k-1},d\alpha_k) d\alpha_0.
$$
The SMC algorithm which
approximates the $n-$time marginal
$$
\eta_n(\varphi) = \frac{\gamma_n(\varphi)}{\gamma_n(1)}
$$
will have joint law:
\begin{equation}
\prod_{i=1}^N \eta_0(\alpha_0^i)d\alpha_0^i \prod_{k=1}^n\Bigg(
\prod_{i=1}^N  \Phi_k(\eta_{k-1}^N)(d\alpha_k^i) \Bigg)\label{eq:smc_law}
\end{equation}
where $\Phi_k(\mu)(dx) = \mu(G_{k-1}M_k(dx))/\mu(G_{k-1})$ is the usual selection-mutation operator and $\eta_{k-1}^N$ is the empirical measure of the particles. It is easily shown
that by approximating $\eta_n$ one can approximate the posterior on $\theta$ and $x_{0:n}$ given $y_{1:n}$; see \cite{chopin,delmoral2}. We note that, typically one will apply the kernel $\bar{M}_k$
when dynamic resampling is performed; however the form of the algorithm is simple to describe with resampling at each time step.

\subsection{Perfect Algorithm}

One issue with the above algorithm is that whenever the kernel $\bar{M}_k$ is applied, one must sample trajectories of the hidden states which grow with the time parameter. Thus, even if
resampling is dynamically performed, leading to an application of $\bar{M}_k$, the cost of the algorithm will increase with the time parameter. So one can say whilst SM$\textrm{C}^2$ is a very
powerful algorithm, it is not an online algorithm. Here we present an `ideal' or perfect version of the algorithm that has computational cost which does not grow with time, but cannot be implemented in general.
This algorithm will provide the basis for our biased algorithm in the next section.

To begin, we set $B\in\mathbb{Z}^+$ which in principle can grow and $T\in\mathbb{Z}^+$ which is fixed. The parameter $T$ will represent the maximum length of the trajectory of the hidden state,
which one wants to sample. We define the following target probabilities:
\begin{eqnarray}
\pi_{n,T}(\theta,x_{0:n}^{1:N_x},a_{1:n}^{1:N_x}) & \propto &   \pi_{\theta}(\theta) \Big(\prod_{i=1}^{N_x}\nu_{\theta}(x_0^i)\Big)\prod_{k=1}^n \beta_k(\theta,x_{k-1:k}^{1:N_x},a_k^{1:N_x}) \label{eq:smc2target}\\ & & \quad n\in\{0,\dots,T\} \nonumber
\end{eqnarray}
and
$$
\pi_{n,bT}(\theta,x_{(b-1)T+1:n}^{1:N_x},a_{(b-1)T+2:n}^{1:N_x}) \propto 
$$
\begin{equation}
\Big(\prod_{i=1}^{N_x}\zeta(\theta,x_{(b-1)T+1}^i)\Big)
\Big(\frac{1}{N_x}\sum_{i=1}^{N_x} g_{\theta}(y_{(b-1)T+1}|x_{(b-1)T+1}^i)\Big)
\prod_{k=(b-1)T+2}^n \beta_k(\theta,x_{k-1:k}^{1:N_x},a_k^{1:N_x}) \label{eq:smc2target_perf} 
\end{equation}
$$
 n\in\{(b-1)T+1,\dots,bT\},
b\in\{2,\dots,B\}
$$
where $\beta_k$ is as \eqref{eq:beta_def} and 
$$
\zeta(\theta,x_{(b-1)T+1}) \propto \pi_{\theta}(\theta)\int  f_{\theta}(x_{(b-1)T+1}|x_{(b-1)T}) \nu_{\theta}(x_0)\prod_{k=1}^{(b-1)T} f_{\theta}(x_k|x_{k-1}) g_{\theta}(y_k|x_k) dx_{0:(b-1)T}.
$$
If one can approximate the targets above one can approximate the posterior on $\theta$ and $x_{0:n}$ given $y_{1:n}$. As we will see below, our algorithm to achieve this cannot be implemented in practice,
but has the benefit that the computational cost per-time step cannot grow beyond a given bound.

\subsubsection{Algorithm}

For $n\in\{0,\dots,T\}$ one can run the SM$\textrm{C}^2$ algorithm as in Section \ref{sec:smc2}. That is to run the algorithm with law \eqref{eq:smc_law} until time $T-1$. We will add a final time
step that will resample the $N$ particles according to $G_{T-1}$. That is, defining $\check{M}_T$ as the Dirac measure, we sample from
$$
\prod_{i=1}^N \eta_0(\alpha_0^i)d\alpha_0^i \prod_{k=1}^{T}\Bigg(
\prod_{i=1}^N  \Phi_k(\eta_{k-1}^N)(d\alpha_k^i) \Bigg).
$$

For the subsequent blocks, we make the following definitions. Let $b\in\{2,\dots,B\}$, $\check{\alpha}_{(b-1)T+1}=(\theta,x_{(b-1)T+1}^{1:N_x})\in\Theta\times\mathsf{X}^{N_x}=\check{\mathsf{E}}_{(b-1)T+1}$,
$\check{\alpha}_{n}=(\theta,x_{(b-1)T+1:n+1}^{1:N_x},a_{(b-1)T+2:n+1}^{1:N_x})\in\Theta\times\mathsf{X}^{(n-(B-1)T+1)N_x}\times\{1\dots,N_x\}^{(n-(B-1)T)N_x}=\check{\mathsf{E}}_{n}$, $n\in\{(b-1)T+1,\dots,bT-1\}$, and
$\check{\alpha}_{bT}=(\theta,x_{(b-1)T+1:bT}^{1:N_x},a_{(b-1)T+2:bT}^{1:N_x})\in\Theta\times\mathsf{X}^{TN_x}\times\{1\dots,N_x\}^{(T-1)N_x}=\check{\mathsf{E}}_{bT}$. Now let
$$
\check{G}_{n}(\check{\alpha}_{n}) = \frac{1}{N_x}\sum_{i=1}^{N_x} g_{\theta}(y_{n}|x_{n}^i) \quad n\in\{(b-1)T+1,\dots,bT\}.
$$ 
Set
$$
\check{\eta}_{(b-1)T+1}(\check{\alpha}_{(b-1)T+1}) =  \Big(\prod_{i=1}^{N_x} \zeta(\theta,x_{(b-1)T+1}^i)\Big).
$$
Define $\tilde{M}_{n,bT,\pi_{n-1,bT}}$ as a Markov kernel of invariant density $\pi_{n-1,bT}$, $n\in\{(b-1)T+2,\dots,bT-1\}$ such as a PMMH kernel. Then define
$$
\check{M}_{n,bT,\pi_{n-1,bT}}(\check{\alpha}_{n-1},d\check{\alpha}_n) = \tilde{M}_{n,bT,\pi_{n-1,bT}}(\check{\alpha}_{n-1},d\check{\alpha}_{n-1}')\check{\beta}_{n+1}(\theta',(x_{n:n+1}^{1:N_x})',(a_{n+1}^{1:N_x})')d(x_{n:n+1}^{1:N_x})'
$$
where $\check{\alpha}_n=(\check{\alpha}_{n-1}',(x_{n:n+1}^{1:N_x})',(a_{n+1}^{1:N_x})')$ and $\check{\beta}$ is as \eqref{eq:beta_def1}. 
Set $\check{M}_{bT,bT,\pi_{bT-1,bT}}$ as a Dirac mass.
Finally, for $\mu\in\mathscr{P}(\check{E}_{n-1})$, $n\in\{(b-1)T+2,\dots,bT\}$ and probability density $\psi$ on $\check{E}_{n-1}$ define
$$
\check{\Phi}_{n,bT,\psi}(\mu)(d\check{\alpha}_n) := \frac{\mu(\check{G}_{n-1}\check{M}_{n,bT,\psi}(d\check{\alpha}_n))}{\mu(\check{G}_{n-1})}.
$$

Then the perfect algorithm has joint law for $b\in\{2,\dots,B\}$, $n\in\{(b-1)T+1,\dots,bT\}$
$$
\prod_{i=1}^N \check{\eta}_{(b-1)T+1}(\check{\alpha}_{(b-1)T+1}^i) \prod_{k=(b-1)T+2}^n \prod_{i=1}^N \check{\Phi}_{k,,bT,\pi_{k-1,bT}}(\check{\eta}_{k-1,bT}^N)(d\check{\alpha}_k^i)
$$
where $\check{\eta}_{k-1,bT}^N$ is the empirical measure of the particles at time $k-1$.

\subsubsection{Remarks}

Set, for $b\in\{2,\dots,B\}$, $n\in\{(b-1)T+2,\dots,bT\}$
$$
\check{\gamma}_{n,bT}(d\check{\alpha}_n) = \int_{\check{E}_{(b-1)T+1}\times\cdots\check{E}_{n-1}} \prod_{k=(b-1)T+1}^{n-1} \check{G}_k(\check{\alpha}_k)  \check{\eta}_{(b-1)T+1}(d\check{\alpha}_{(b-1)T+1})  \times
$$
$$
\prod_{k=(b-1)T+1}^{n} \check{M}_{k,bT,\pi_{k-1,bT}}(\check{\alpha}_{k-1},d\check{\alpha}_k)
$$
and $\check{\eta}_{n,bT}(d\check{\alpha}_n) = \check{\gamma}_{n,bT}(d\check{\alpha}_n)/\check{\gamma}_{n,bT}(1)$, then we note that for $\varphi:\check{\mathsf{E}}_{n-1}\rightarrow\mathbb{R}$, $\pi_{n,bT}-$integrable
$$
\frac{\check{\eta}_{n,bT}(\check{G}_n\varphi)}{\check{\eta}_{n,bT}(\check{G}_n)}=
$$
$$
\int_{\check{E}_{n-1}} \varphi(\theta,x_{(b-1)T+1:n}^{1:N_x},a_{(b-1)T+2:n}^{1:N_x}) \pi_{n,bT}(\theta,x_{(b-1)T+1:n}^{1:N_x},a_{(b-1)T+2:n}^{1:N_x}) d(\theta,x_{(b-1)T+1:n}^{1:N_x}) 
$$
so that one can estimate expectations w.r.t.~$\pi_{n,bT}$ via 
\begin{equation}
\frac{\check{\eta}_{n,bT}^N(\check{G}_n\varphi)}{\check{\eta}_{n,bT}^N(\check{G}_n)}\label{eq:est_post}.
\end{equation}

\subsection{Approximate Algorithm}

The problem with the previous algorithm is that one can seldom evaluate $\zeta(\theta,x_{(b-1)T+1})$ nor sample from it perfectly. We introduce the following algorithm, which will sample from the following targets.
For the first block, one can run the SM$\textrm{C}^2$ algorithm to target the sequence \eqref{eq:smc2target}. At the subsequent blocks, one is unable to evaluate the target, nor sample from the proposals. We 
propose the following approximate targets to replace \eqref{eq:smc2target_perf}:
$$
\hat{\pi}_{n,bT}(\theta,x_{(b-1)T+1:n}^{1:N_x},a_{(b-1)T+2:n}^{1:N_x})  \propto \Big(\prod_{i=1}^{N_x} \frac{1}{N}\sum_{j=1}^N K_{N}(\theta-\theta^j) f_{\theta}(x_{(b-1)T+1}^i|x_{(b-1)T}^j)\Big)\times
$$
\begin{equation}
\Big(\frac{1}{N_x}\sum_{i=1}^{N_x} g_{\theta}(y_{(b-1)T+1}|x_{(b-1)T+1}^i)\Big) \prod_{k=(b-1)T+2}^n \beta_k(\theta,x_{k-1:k}^{1:N_x},a_k^{1:N_x})\label{eq:approx_target_def}
\end{equation}
with $n\in\{(b-1)T+1,\dots,bT\}, b\in\{2,\dots,B\}$, $\theta^{1:N},x_{(b-1)T}^{1:N}$ samples from the algorithm at the previous block, which we shall describe how to obtain and $K_{N}(\theta-\theta')$ a kernel density whose bandwidth may depend on $N$.
At the end of a block of the algorithm, we just take $\theta^{1:N}$, $x_{(b-1)T}^{1:N}$ as the samples we have obtained, taking the first of the $N_x-$tuples of $x_{T}^{1:N_x}$ (as the samples are exchangeable). Using standard SMC theory, which we will expand upon,
one can prove that if the bandwidth of $K$ appropriately depends on $N$ that at least $\hat{\pi}_{n,2T}(\theta,x_{T+1:n}^{1:N_x},a_{T+2:n}^{1:N_x})$ will converge almost surely (in an appropriate sense) to $\pi_{n,2T}(\theta,x_{T+1:n}^{1:N_x},a_{T+2:n}^{1:N_x})$ as $N$ grows,
hence providing the justification of the approximation introduced.

Set
\begin{eqnarray*}
\widehat{\check{\eta}}_{(b-1)T+1}(\check{\alpha}_{(b-1)T+1}) &=& \Big(\prod_{i=1}^{N_x} \frac{1}{N}\sum_{j=1}^N K_{N}(\theta-\theta^j) f_{\theta}(x_{(b-1)T+1}^i|x_{(b-1)T}^j)\Big)
\end{eqnarray*}
Then the approximate algorithm has joint law for $b\in\{2,\dots,B\}$, $n\in\{(b-1)T+1,\dots,bT\}$
$$
\prod_{i=1}^N \widehat{\check{\eta}}_{(b-1)T+1}(\check{\alpha}_{(b-1)T+1}^i) \prod_{k=(b-1)T+2}^n \prod_{i=1}^N \check{\Phi}_{k,bT,\hat{\pi}_{k-1,bT}}(\check{\eta}_{k-1,bT}^N)(d\check{\alpha}_k^i)
$$
where $\check{\eta}_{k-1,bT}^N$ is the empirical measure of the particles at time $k-1$. An estimate of the form \eqref{eq:est_post} can be used to estimate the targets. Note that the cost of the algorithm is not $\mathcal{O}(N^2)$
as one does not need to evaluate $\widehat{\check{\eta}}_{(b-1)T+1}(\check{\alpha}_{(b-1)T+1})$, even in the PMMH steps.

\subsection{Related Simulation Methods and Alternatives}

Similar, but different, ideas have appeared in several articles including \cite{andrieu,cent,dan}. These ideas are considered in the context of hidden
Markov models and partially observed point processes respectively. The key differences of our work to \cite{cent, dan} (\cite{andrieu} is for maximum likelihood estimation (MLE))
are as follows. In the context of \cite{cent} we do not use a type of `sequential MCMC', in that our approach can be used explicitly for online Bayesian parameter estimation. The approach of
\cite{dan} is less general, where the particles are not updated with PMCMC, and one has a block length of 1.

Alternatives to the approach outlined above are; (i) a form of sequential PMCMC in the spirit of \cite{cent} or (ii) to use an over-lapping or sliding window. For (i), one expects that the cost is higher than the above algorithm,
and online (filtered) estimates are not available. For (ii) the cost may be higher, but, in simulation studies, we did not find any obvious improvement in practice. We also remark that our approximate algorithm uses kernel density
estimation, but, in principle, any approximation scheme could be used; this is demonstrated in Section \ref{sec:simos} where the method in \cite{chan} is used in place of a kernel density estimate. In practice, the algorithm is implemented
with dynamic resampling according to the effective sample size.

\section{Theoretical Results}\label{sec:theory}

Throughout, it is supposed that for any $n\geq 1$, $\sup_{\theta,x}g_{\theta}(y_n|x)<+\infty$.

\subsection{Consistency}

We will now show that, under a specific choice of the bandwidth $h$ and under some mathematical assumptions, the algorithm just presented is consistent.
We set $K_h(\theta) = h^{-d}K(h^{-1}\theta)$; it is supposed that for any fixed $\theta$, $\lim_{h\rightarrow 0} K_h(\theta) = 0$. For simplicity we denote $\eta_{n}(x)$ (resp.~$\mathsf{E}_n$), $n\in\{0,\dots,T\}$ as $\check{\eta}_{n,T}(x)$ (resp.~$\check{\mathsf{E}}_n$).

\begin{hypA}
We have 
\begin{itemize}
\item{$\int_{\Theta}K(\theta)d\theta=1$, $K(\theta)\geq 0~\forall \theta\in\Theta$}
\item{$\int\|\theta\|^2K(\theta)d\theta<+\infty$}
\item{$K\in\mathcal{C}_b(\Theta)$.}
\end{itemize}
\end{hypA}
\begin{hypA}
For each $n,b$, $b\in\{1,\dots,B\}$, $n\in\{(b-1)T+1,\dots,bT\}$, there exists a $L_{n,k}>0$ such that for every $\theta,\theta'\in\Theta$
$$
|\check{\eta}_{n,bT}(\theta)-\check{\eta}_{n,bT}(\theta')| \leq  L_{n,k}\|\theta-\theta'\|.
$$
\end{hypA}
\begin{hypA}\label{hyp:3}
We have $h=N^{-\frac{1}{2(d+1)}}$, in addition there exist a $C>0$ such that for every $N$ $\sup_{\theta\in\Theta}|K_N(\theta)|\leq C$.
\end{hypA}
\begin{hypA}\label{hyp:4}
For any $b\in\{2,\dots,B\}$, $n\in\{(b-1)T+1,\dots,bT\}$, $\varphi\in\mathcal{B}(\check{\mathsf{E}}_{n})$ the function $\eta\mapsto \check{M}_{n,bT,\eta}^n(\varphi)(x)$ is continuous at 
$\eta=\pi_{n-1,bT}$ uniformly in $x\in \check{\mathsf{E}}_{n-1}$; also $\sup_{\eta,x}\check{M}_{n,bT,\eta}^n(\varphi)(x)<+\infty$.
\end{hypA}

The first three assumptions are from \cite{crisan} to allow convergence of the SMC estimate of the kernel density and the final assumption is from \cite{beskos}. We set
$$
\check{\eta}_{(b-1)T,(b-1)T}^{N}(K_N(\theta)f_{\theta}(x_{(b-1)T+1}|\cdot)) = \frac{1}{N}\sum_{j=1}^N K_{N}(\theta-\theta^j) f_{\theta}(x_{(b-1)T+1}|x_{(b-1)T}^j).
$$
Let $\rightarrow_{\mathbb{P}}$ denote convergence in probability as $N\rightarrow\infty$. We now have the following result whose proof is in Appendix \ref{app:a}.

\begin{theorem}\label{theo:consis}
Assume (A1-4). Then for each $n,b$, $b\in\{1,\dots,B\}$, $n\in\{(b-1)T+1,\dots,bT\}$ and
$\varphi\in\mathcal{B}_b(\check{\mathsf{E}}_n)$ we have
$$
\check{\eta}_{n,bT}^{N}(\varphi) \rightarrow_{\mathbb{P}} \check{\eta}_{n,bT}(\varphi)
$$
and in particular for every $\theta\in\Theta, x_{(b-1)T+1}\in\mathsf{X}$
$$
\check{\eta}_{(b-1)T,(b-1)T}^{N}(K_N(\theta)f_{\theta}(x_{(b-1)T+1}|\cdot)) \rightarrow_{\mathbb{P}} \zeta(\theta,x_{(b-1)T+1}).
$$
\end{theorem}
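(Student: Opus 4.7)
The plan is to proceed by induction on the block index $b\in\{1,\ldots,B\}$, establishing at step $b$ both the main statement for $n\in\{(b-1)T+1,\ldots,bT\}$ and the ``in particular'' kernel-density conclusion that will feed the next block. For the base case $b=1$ the algorithm within the first block is exactly standard SMC$^2$ targeting the Feynman--Kac flow $(\eta_0,G_k,M_k)$ of Section~\ref{sec:smc2}, so $\check\eta_{n,T}^N(\varphi)\to_{\mathbb P}\check\eta_{n,T}(\varphi)$ for $\varphi\in\mathcal B_b(\check{\mathsf E}_n)$ follows from the classical weak law of large numbers for particle approximations of Feynman--Kac measures; the ``in particular'' claim at $b=1$ reduces to a kernel density estimate built from i.i.d.\ samples and is a direct application of A1--A3.

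Given the main statement at block $b-1$, I would next establish the ``in particular'' claim at block $b$ by the standard SMC-based kernel density estimation argument of \cite{crisan}. Decompose
\[
\check\eta_{(b-1)T,(b-1)T}^{N}(K_N(\theta-\cdot)f_{\cdot}(x_{(b-1)T+1}|\cdot))-\zeta(\theta,x_{(b-1)T+1})
\]
into a variance piece $[\check\eta_{(b-1)T,(b-1)T}^{N}-\check\eta_{(b-1)T,(b-1)T}](K_N(\theta-\cdot)f_{\cdot}(x_{(b-1)T+1}|\cdot))$ and a bias piece $\check\eta_{(b-1)T,(b-1)T}(K_N(\theta-\cdot)f_{\cdot}(x_{(b-1)T+1}|\cdot))-\zeta(\theta,x_{(b-1)T+1})$. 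For fixed $\theta$ the test function is bounded uniformly in $N$ by A3, so the variance piece vanishes in probability by the inductive hypothesis applied to this bounded functional; the bias piece is handled, after a change of variables, by A1--A2 (Lipschitz continuity of $\theta'\mapsto\check\eta_{(b-1)T,(b-1)T}(\theta',\cdot)$) together with the bandwidth choice $h=N^{-1/(2(d+1))}$ in A3.

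To propagate the main statement through block $b$ I would work conditionally on the $\sigma$-algebra $\mathcal F^{b-1}$ generated by the first $b-1$ blocks. Conditionally on $\mathcal F^{b-1}$, block $b$ is an ordinary SMC flow with (now deterministic) initial density $\widehat{\check\eta}_{(b-1)T+1}$ and mutation kernels $\check M_{k,bT,\hat\pi_{k-1,bT}}$, so the standard SMC LLN gives conditional convergence of $\check\eta_{n,bT}^N$ to the Feynman--Kac limit attached to those data. It remains to show that this limit converges to $\check\eta_{n,bT}$ as $\mathcal F^{b-1}$ varies: the ``in particular'' step at $b-1$ yields $\widehat{\check\eta}_{(b-1)T+1}\to_{\mathbb P}\check\eta_{(b-1)T+1}$ pointwise and hence $\hat\pi_{k-1,bT}\to_{\mathbb P}\pi_{k-1,bT}$, whereupon A4 supplies $\check M_{k,bT,\hat\pi_{k-1,bT}}\varphi\to_{\mathbb P}\check M_{k,bT,\pi_{k-1,bT}}\varphi$ uniformly in the state. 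Iterating the identity
\[
\check\eta_{k,bT}^N(\varphi)-\check\eta_{k,bT}(\varphi)=\bigl[\check\eta_{k,bT}^N-\check\Phi_{k,bT,\hat\pi_{k-1,bT}}(\check\eta_{k-1,bT}^N)\bigr](\varphi)+\bigl[\check\Phi_{k,bT,\hat\pi_{k-1,bT}}(\check\eta_{k-1,bT}^N)-\check\Phi_{k,bT,\pi_{k-1,bT}}(\check\eta_{k-1,bT})\bigr](\varphi)
\]
then closes the induction: the first bracket is a conditional martingale increment vanishing in probability at the usual SMC $L^2$ rate, and the second splits into a continuity term (controlled by A4 applied with $\check\eta_{k-1,bT}^N$ in place of $\check\eta_{k-1,bT}$) and a propagation term (controlled by the within-block inductive hypothesis).

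The principal obstacle is the interaction between two sources of randomness: the Monte Carlo error inside block $b$, which vanishes as $N\to\infty$ for any fixed kernel, and the error in $\check M_{k,bT,\hat\pi_{k-1,bT}}$ inherited from block $b-1$, which vanishes only because the block-$(b-1)$ empirical measure concentrates. The \emph{uniformity-in-state} clause of A4 is essential here, since these perturbed kernels are iterated within the block; without it, the standard product-of-kernels bounds for Feynman--Kac flows would degrade and the inductive propagation across blocks would fail to close. I expect the bulk of the technical bookkeeping to lie in carefully chaining this uniform continuity through the within-block recursion while keeping the conditional-martingale estimates clean.
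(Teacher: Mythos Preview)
Your overall inductive structure and within-block decomposition match the paper's proof closely: both argue block $1$ by standard SMC theory, then propagate across blocks by splitting $\check\eta_{k,bT}^N(\varphi)-\check\eta_{k,bT}(\varphi)$ into a conditional martingale increment (handled by a Marcinkiewicz--Zygmund bound) and a remainder, the latter further split into a kernel-perturbation term controlled by the uniform continuity in (A\ref{hyp:4}) and a propagation term controlled by the previous step. Your identification of (A\ref{hyp:4}) as the key device for chaining through the perturbed kernels is exactly what the paper does.

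There is, however, one genuine gap in your kernel-density step. You write that the ``variance piece'' $[\check\eta_{(b-1)T,(b-1)T}^{N}-\check\eta_{(b-1)T,(b-1)T}](K_N(\theta-\cdot)f_{\cdot}(x|\cdot))$ vanishes ``by the inductive hypothesis applied to this bounded functional.'' But the inductive hypothesis is convergence in probability for each \emph{fixed} $\varphi\in\mathcal B_b$, whereas here the test function $K_N(\theta-\cdot)f_{\cdot}(x|\cdot)$ depends on $N$; uniform boundedness from (A\ref{hyp:3}) alone does not let you pass from pointwise-in-$\varphi$ convergence to convergence along an $N$-indexed sequence of test functions. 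The paper avoids this by a different decomposition: it centers by the conditional expectation given $\mathscr F_{(b-1)T}$ (so the first piece is a conditional i.i.d.\ fluctuation, controlled by Marcinkiewicz--Zygmund with constant depending only on the sup bound in (A\ref{hyp:3})), and then handles the remaining term $\Phi_{(b-1)T}(\eta_{(b-1)T-1}^N)(K_N(\theta)f_\theta(x|\cdot))-\zeta(\theta,x)$ via the arguments of \cite[Theorem~4.1]{crisan}, which are designed precisely for $N$-dependent kernel integrands. Your decomposition into $[\check\eta^N-\check\eta](K_N\cdot)$ plus a deterministic bias is cleaner for the second piece but transfers the difficulty to the first; to close it you must either adopt the paper's conditional-fluctuation route or strengthen your inductive hypothesis to an $L^2$ bound of the form $\mathbb E[(\check\eta^N(\varphi)-\check\eta(\varphi))^2]\le C\|\varphi\|_\infty^2/N$, which then makes (A\ref{hyp:3}) sufficient.
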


\begin{rem}
The result here is essentially qualitative. In order to obtain consistency, one must adopt an exponential effort in $d$ and so one would only run (with the choice of $h$ as in (A\ref{hyp:3})) over the exact algorithm
if the time horizon is long and $d$ is small. In practice one will decouple $h$ and $N$, leading to estimates which are biased, even if $N\rightarrow\infty$. We will now study the bias.
\end{rem}

\subsection{Bias}

We now investigate the asymptotic (in $N$) bias of the approach. We make the following hypothesis:

\begin{hypA}\label{hyp:bias}
There exists a $\delta\in[1,\infty)$  such that for any $n\geq T+1$:
$$
\sup_{(x,y)\in\check{\mathsf{E}}_{n}} \frac{\check{G}_n(x)}{\check{G}_n(y)} \leq \delta.
$$
There exists an $\epsilon\in(0,1)$ and for any $b\in\{2,\dots,B\}$, $n\in\{(b-1)T+2,\dots,bT-1\}$
there exists a $\nu\in\mathscr{P}(\check{\mathsf{E}}_n)$ such that for any probability density $\psi$ on $\check{E}_{n-1}$
$(x,y)\in\check{E}_{n-1}^2$
$$
\check{M}_{n,bT,\psi}(x,\cdot) \geq \epsilon\Big(\check{M}_{n,bT,\psi}(x,\cdot)\wedge \nu(\cdot)\Big).
$$
\end{hypA}
We write the $N-$limiting version of \eqref{eq:approx_target_def} as $\check{\pi}_{n,bT,\infty}$ (which one can easily prove exists, for any $b\in\{2,\dots,B\}$).
For a bounded $\varphi:\mathsf{E}_{bT}\rightarrow\mathbb{R}$ we define the bias as
$$
\mathsf{B}(bT,\varphi) = \Big|\Big\{\mathbb{E}_{\pi_{bT,bT}} - \mathbb{E}_{\check{\pi}_{n,bT,\infty}}\Big\}[\varphi(X_{bT}^{1:N_x},a_{bT}^{1:N_x},\theta)]\Big|.
$$
We have the following result whose proof is in Appendix \ref{app:bias}.

\begin{theorem}\label{theo:bias}
Assume (A\ref{hyp:bias}). Then there exist a $C<+\infty$, $\rho_1,\rho_2\in (0,1)$ such that for any $b\in\{2,\dots,B\}$:
$$
\mathsf{B}(bT,\varphi) \leq C\|\varphi\|_{\infty}(1-\rho_1^{T-1} + \rho_2^{T-1}).
$$
\end{theorem}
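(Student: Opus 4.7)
The plan is to view both $\pi_{bT,bT}$ and $\check{\pi}_{bT,bT,\infty}$ as normalized Feynman-Kac path measures over block $b$ (using the representation in the Remarks subsection), with the same potentials $\check{G}_k$ but different initial distributions and different Markov kernels, and then combine standard Feynman-Kac stability with a telescoping perturbation argument.

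First I would fix $b\in\{2,\dots,B\}$ and express the exact target as $\pi_{bT,bT}(\varphi) = \check{\eta}_{bT,bT}(\check{G}_{bT}\varphi)/\check{\eta}_{bT,bT}(\check{G}_{bT})$, where $\check{\eta}_{bT,bT}$ is the Feynman-Kac semigroup applied to $\check{\eta}_{(b-1)T+1}$ through kernels $\check{M}_{k,bT,\pi_{k-1,bT}}$; the limit $\check{\pi}_{bT,bT,\infty}$ admits the same representation with $\check{\eta}_{(b-1)T+1}$ replaced by $\widehat{\check{\eta}}_{(b-1)T+1,\infty}$ and the kernels replaced by $\check{M}_{k,bT,\check{\pi}_{k-1,bT,\infty}}$. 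The uniform ratio bound on $\check{G}$ from (A\ref{hyp:bias}) means that controlling $\mathsf{B}(bT,\varphi)$ reduces (up to a multiplicative constant) to controlling the total variation distance between the two unnormalized Feynman-Kac flows at time $bT-1$, tested against $\check{G}_{bT}\varphi$.

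Next, I would insert an intermediate flow that uses the approximate initial measure $\widehat{\check{\eta}}_{(b-1)T+1,\infty}$ but the exact kernels $\check{M}_{k,bT,\pi_{k-1,bT}}$, and apply the triangle inequality. The difference between $\check{\eta}_{bT,bT}$ and this intermediate flow is driven purely by the initial-measure discrepancy. Under (A\ref{hyp:bias}) the combination of the $\delta$-bound on potentials and the uniform Doeblin minorization yields the standard strong-mixing/Dobrushin-type contraction of the Feynman-Kac semigroup (see e.g.\ Del Moral's Feynman-Kac book), which propagates this initial TV distance over the $T-1$ transitions within the block with a geometric factor $\rho_2^{T-1}$; since TV distances between probability measures are bounded by $2$, this yields the term $C\|\varphi\|_\infty\rho_2^{T-1}$.

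The remaining difference — between the intermediate flow and $\check{\pi}_{bT,bT,\infty}$ — is driven by the discrepancy between the kernels $\check{M}_{k,bT,\pi_{k-1,bT}}$ and $\check{M}_{k,bT,\check{\pi}_{k-1,bT,\infty}}$ at each step $k\in\{(b-1)T+2,\dots,bT-1\}$. I would write this as a telescoping sum of $T-1$ one-step swaps, bounding each single-step perturbation by a constant multiple of $\|\varphi\|_\infty$ (using boundedness of the potentials) and then contracting the remaining $bT-1-k$ steps by the semigroup-stability factor, giving a geometric sum
\[
\sum_{j=0}^{T-2}\rho_1^{j}\,C\|\varphi\|_\infty \;=\; \frac{C\|\varphi\|_\infty(1-\rho_1^{T-1})}{1-\rho_1},
\]
and the $1/(1-\rho_1)$ factor is absorbed into the constant. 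Combining the two pieces delivers the stated bound, which is independent of $b$, yielding the non-accumulation property.

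The main obstacle will be the self-referential nature of $\check{\pi}_{n,bT,\infty}$, which appears both as the measure being computed and as the invariant of the kernels used to define it. One must verify that this limit is well-defined (the excerpt asserts this can be shown) and, more importantly, that the minorization in (A\ref{hyp:bias}) — which is stated to hold for \emph{every} probability density $\psi$ on $\check{E}_{n-1}$ — is strong enough to ensure that both the exact semigroup and the limiting semigroup contract at the same geometric rate regardless of which invariant is plugged into $\check{M}_{k,bT,\psi}$. This uniformity is precisely what makes the contraction constants $\rho_1,\rho_2$ well-defined independently of the step and of $\psi$, and is what ultimately allows the telescoping bound to be independent of $b$.
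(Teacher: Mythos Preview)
Your proposal is correct and follows essentially the same route as the paper. The paper likewise reduces to a single block, introduces the intermediate flow $\Phi_T^1(\eta_0^2)$ (exact kernels, approximate initial law), controls the initial-measure term via a Feynman--Kac Dobrushin contraction (Lemma~\ref{lem:forget}) to obtain the $\rho_2^{T-1}$ contribution, and handles the kernel-mismatch term by the telescoping sum in Proposition~\ref{prop:bias} together with the one-step bound $\|\Phi_p^1(\mu)-\Phi_p^2(\mu)\|_{\mathrm{tv}}\le 1-\epsilon$ coming from the uniform-in-$\psi$ minorization (Lemma~\ref{lem:tech_res_bias}), summing to the $1-\rho_1^{T-1}$ contribution; the ``self-referential'' concern you flag is exactly what that uniformity dissolves.
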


\begin{rem}
The result indicates that some of the bias can fall at a geometric rate as $T$ grows. Note, that the bias cannot disappear once, one block is wrong (which is by assumption in the statement of the theorem).
 The result also provides the reassuring point that bias' do not accumulate as the number of blocks grow, albeit under strong assumptions.
\end{rem}

\section{Simulations}\label{sec:simos}

\subsection{Gaussian linear model}

\subsubsection{Model and algorithm setting}

We consider the following Gaussian model,
\begin{align*}
  \nu_{\theta}(x_0) &= \phi(x_0; 0, \tau_0^{-1}) \\
  f_{\theta}(x_k|x_{k-1}) &= \phi(x_k; x_{k-1}, \tau^{-1}) \\
  g_{\theta}(y_k|x_k) &= \phi(y_k; x_k, \lambda^{-1})
\end{align*}
where $\phi(x; \mu, \sigma^2)$ is the density function of a Normal
distribution with mean $\mu$ and variance $\sigma^2$.

The data is simulated from $\tau_0 = \tau = \lambda = 1$, with 10,000 time
steps. The algorithms are implemented with $\tau_0$ fixed to $1$ and the other
two being estimated. That is, the parameter is $\theta = (\tau, \lambda)$.

We consider three algorithms. First, due to the simple structure of the model,
we can obtain exact evaluation of $p(y_k|y_{1:k-1})$ through a Kalman filter. We will replace the particle filter in the SM$\textrm{C}^2$
algorithm with the Kalman filter; this results in a regular SMC algorithm. This is
used to provide accurate unbiased estimators when comparing algorithms. The
second is the SM$\textrm{C}^2$  algorithm and the third is the proposed new algorithm, which is termed
SM$\textrm{C}^2$FW.  The PMCMC step in each algorithm is constructed as a two-block
Metropolis random walk, with a Normal kernel on the logarithm scale.

With the Kalman filter, the computational cost of $p(y_k|y_{1:k-1})$ is
negligible and thus we use a large number of $\theta$-particles, $N=
10000$. This allows us to obtain unbiased posterior estimators with very small
variance. As a result, this provides a good baseline for the comparison of the other two
algorithms.

For the SM$\textrm{C}^2$ algorithm, we use $N = 1000$ particles for the SMC
algorithm and $N_x = 1000$ particles. For simplicity, the value of $N_x$ is fixed through the time line.

For the SM$\textrm{C}^2$FW algorithm, we set $N = 10,000$ and $N_x = 1000$. As
we will see later, despite the increased number of the $\theta$-particles, the
computational cost is still significantly lower than SM$\textrm{C}^2$, in addition to
be upper bounded per time step. The kernel density estimate (KDE) is bivariate Normal on
the logarithm scale, with covariance matrix taken a diagonal form, $\Sigma =
hI_2$ where $I_2$ is the identity matrix of rank two. We consider $h =
0.01$, $0.05$ and $0.25$. Three widths of each fixed window $T = 125$, $500$ and
$1000$ are also considered.

\subsubsection{Results}

In Figure~\ref{fig:gl_mean_x} to~\ref{fig:gl_mean_lambda} we show the average
of estimates over 30 simulations, given different bandwidth $h$ and window
width $T$. The first 125 time steps are cutoff from the graphs, during which
time the SM$\textrm{C}^2$FW algorithm is exactly the same as the SM$\textrm{C}^2$ algorithm.

\begin{figure}[t]
  \includegraphics[width=\linewidth]{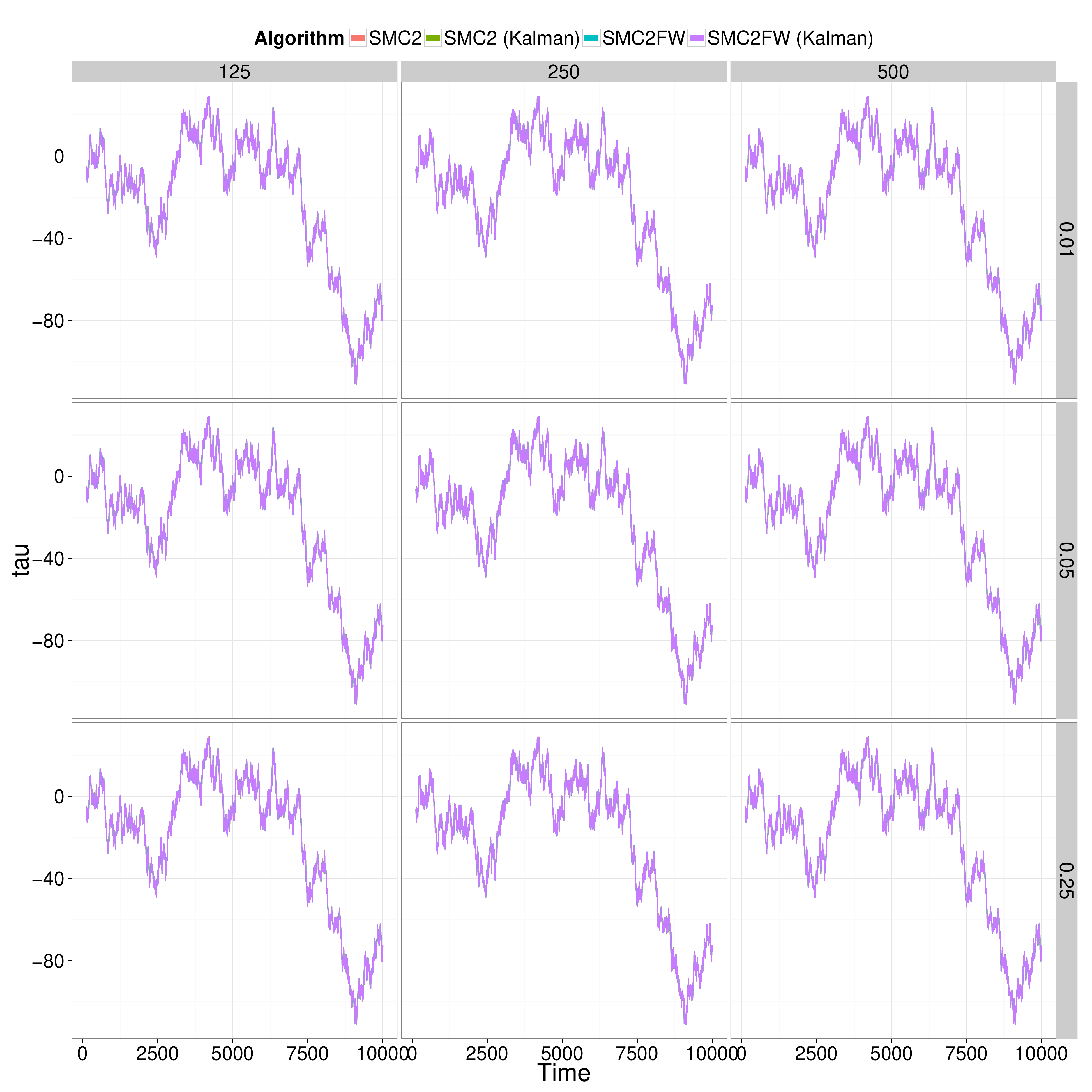}
  \caption{Average of estimates of state $X_k$ for the Gaussian linear model.}
  \label{fig:gl_mean_x}
\end{figure}

\begin{figure}[t]
  \includegraphics[width=\linewidth]{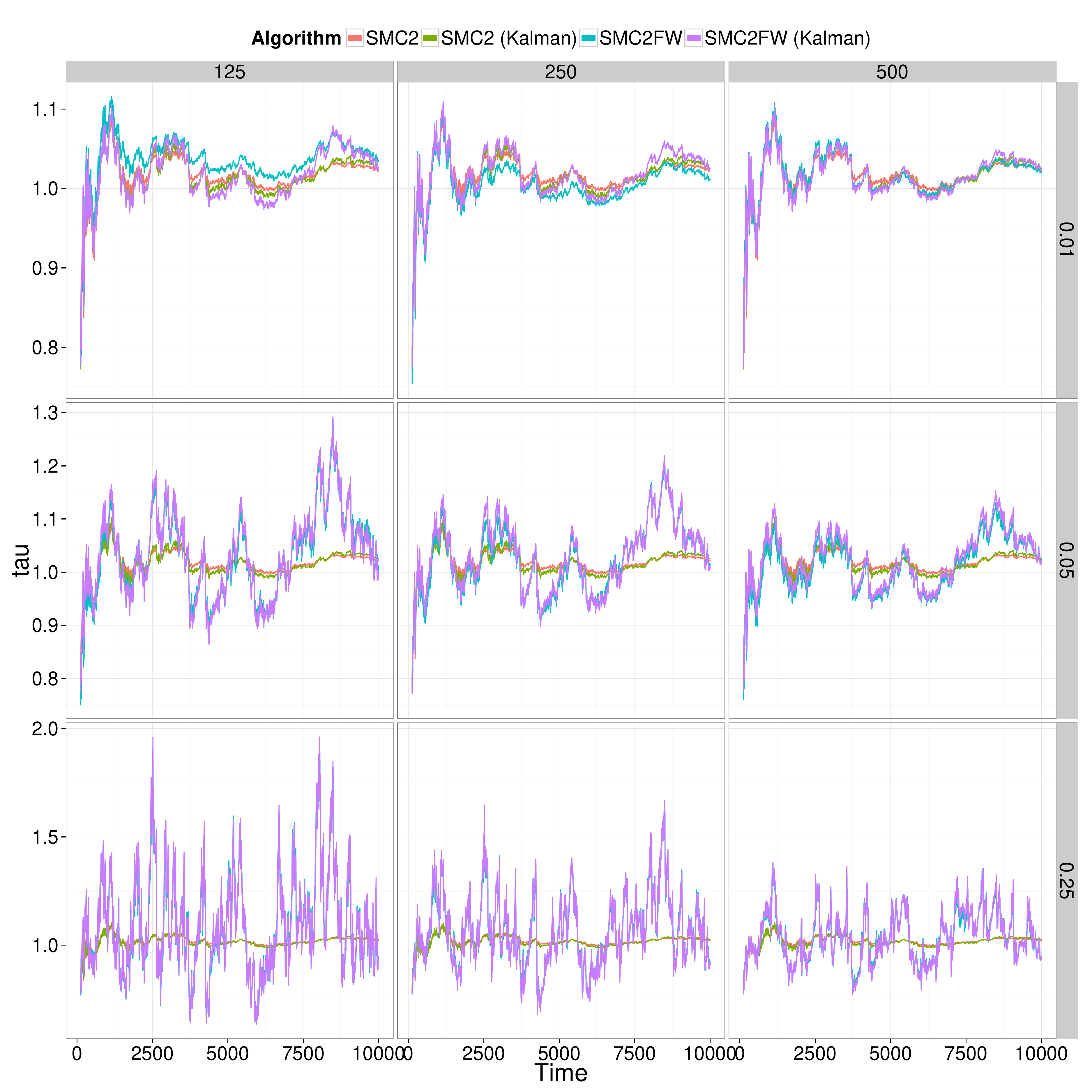}
  \caption{Average of estimates of parameter $\tau$ (transition precision) for
   the Gaussian linear model.}
  \label{fig:gl_mean_tau}
\end{figure}

\begin{figure}[t]
  \includegraphics[width=\linewidth]{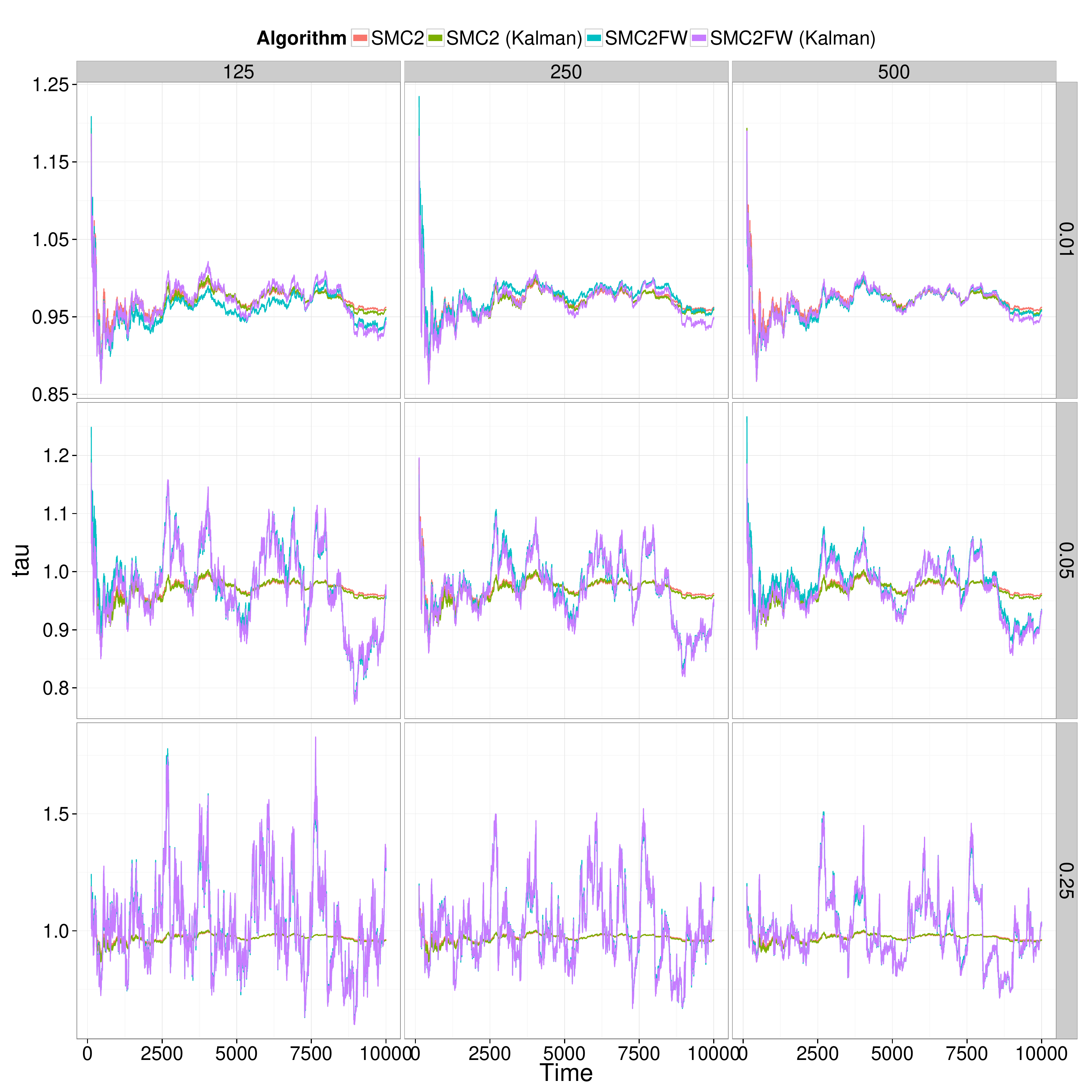}
  \caption{Average of estimates of parameter $\lambda$ (observation
    precision) for the Gaussian linear model.}
  \label{fig:gl_mean_lambda}
\end{figure}

Algorithms labeled with ``Kalman'' means a Kalman filter is used in place of a
particle filter and thus exact values of $p(y_k|y_{1:k-1})$ are calculated
instead of approximations. In this case, the results show the behavior of the
SM$\textrm{C}^2$FW algorithm when $N_x\to\infty$. However, in this particular example,
errors introduced by the particle filter approximation is minimal as shown in
the graphs.

It is as expected that the longer the window width, the better the SM$\textrm{C}^2$FW
algorithms perform. The choice of the bandwidth $h$ has a more dramatic effect
on the performance. With $h = 0.01$ the algorithms give almost exact results
even for $T = 125$. On the other hand, with $h = 0.25$, the errors are
numerous. For the state $X_k$, neither the bandwidth nor the window width
affect the results in any observable way. For the parameters, $h = 0.01$ and
$T = 500$ provides the best results.

In Figure~\ref{fig:gl_mse} the MSE of the two algorithms are shown, using the
results obtained with a SMC algorithm using Kalman filters (SM$\textrm{C}^2$ (Kalman)
in previous figures) as an unbiased, accurate estimate of the true posterior
means. It can be seen that for the state $X_k$, the results are mixed. For the
$\tau$ parameter the SM$\textrm{C}^2$ algorithm has a smaller MSE while the opposite
is true for the $\lambda$ parameter. It shall be noted that, the SM$\textrm{C}^2$FW
algorithm use $N = 10,000$ while the SM$\textrm{C}^2$ algorithm only use
$N = 1000$. However, despite the large difference of the numbers of
$\theta$-particles, the SM$\textrm{C}^2$FW algorithm is still significantly more
computationally cost efficient. It took about thirty minutes for a single
run of the SM$\textrm{C}^2$FW algorithm under this setting while it took about five
hours for the SM$\textrm{C}^2$ algorithm. More importantly, the cost of SM$\textrm{C}^2$FW is
bounded per time step, and thus it is possible to obtain better results than
SM$\textrm{C}^2$ for all parameters while having a bounded, smaller cost in the long
run.

\begin{figure}[t]
  \includegraphics[width=\linewidth]{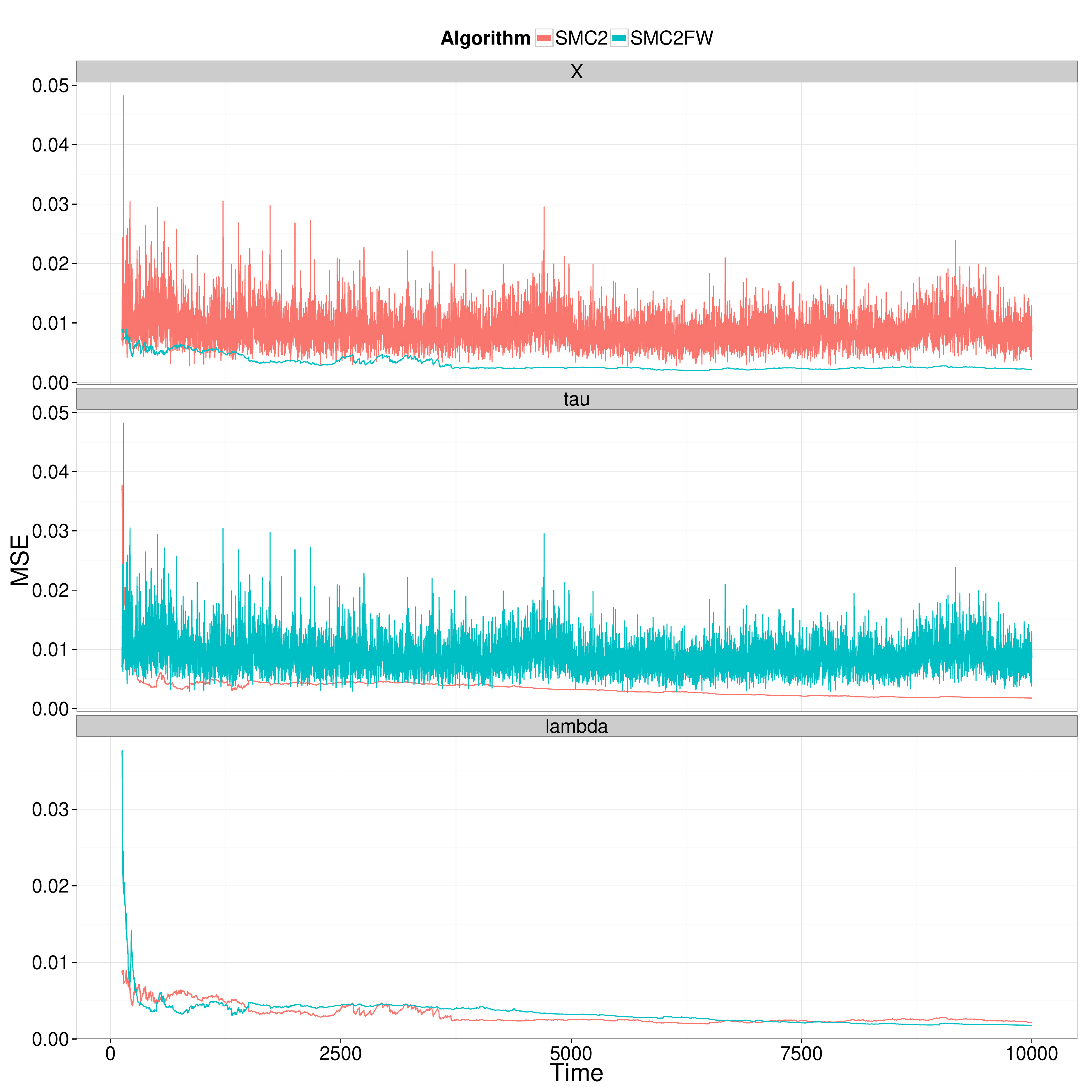}
  \caption{MSE of estimates of parameters and states for the Gaussian linear
    model. The SM$\textrm{C}^2$FW algorithm uses $h = 0.01$ and $l = 500$.}
  \label{fig:gl_mse}
\end{figure}

\subsection{\protect\levy-driven stochastic model}

\subsubsection{Model and algorithm setting}

We consider the \levy-driven stochastic volatility model, applied to 1,000
recent S\&P 500 data from September 17, 2010 to September 8, 2014. The data
are obtained as logarithm return of the daily adjusted close price, and then
normalized to unity variance. The data is plotted in
Figure~\ref{fig:levy_data}. There are considerably much larger volatility at
the beginning of the series. It becomes much more stable in the middle and
slightly larger at the end.

\begin{figure}[t]
  \includegraphics[width=\linewidth]{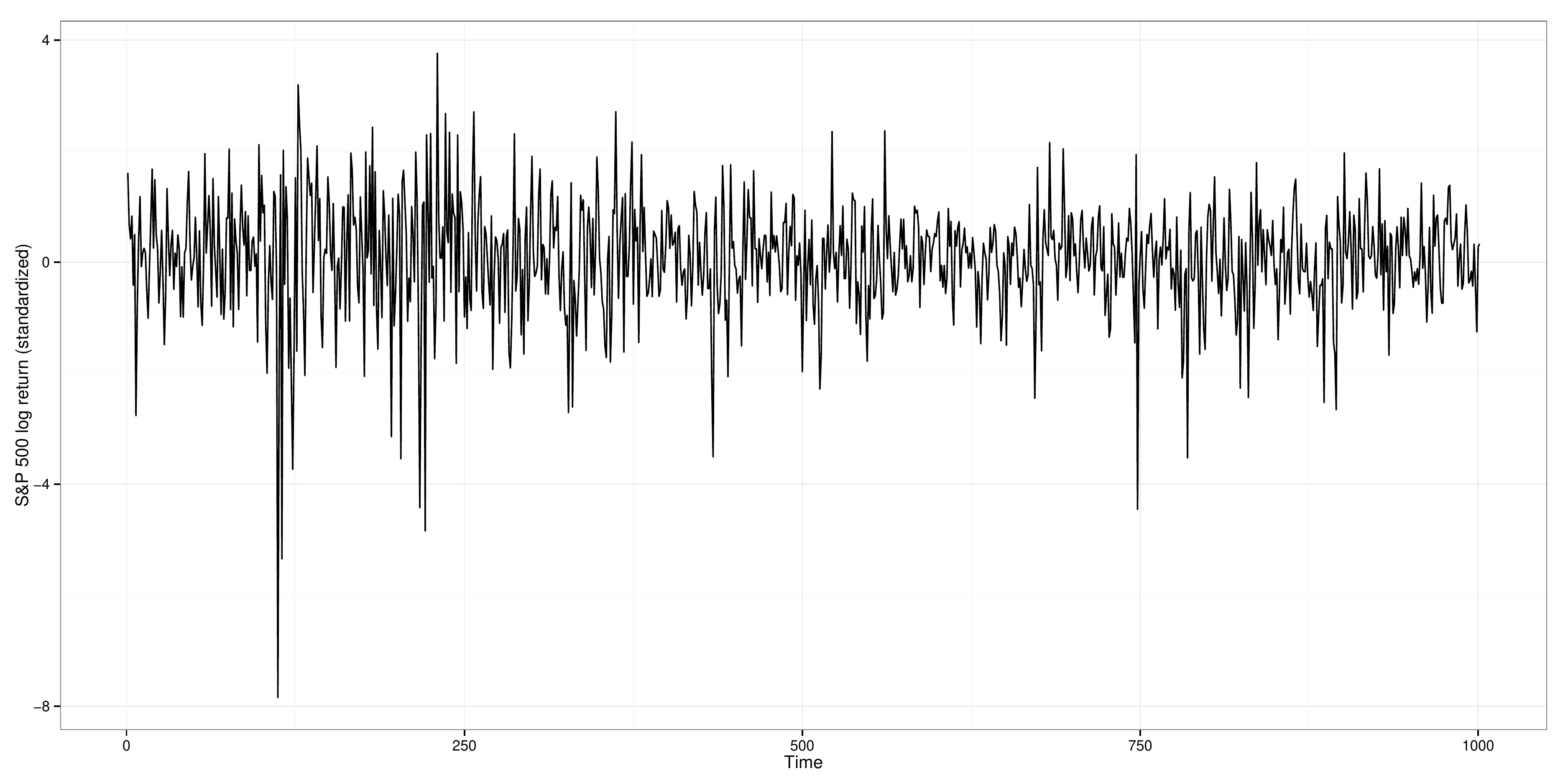}
  \caption{S\&P 500 daily log-return data.}
  \label{fig:levy_data}
\end{figure}

The model we employed to analyze this data is the same as in
\cite{Andrieu_2010} and we use the same notations and
formulation as in that paper. The model has four parameters $\theta = (\kappa,
\delta, \gamma, \lambda)$ and a two-dimensional state $X_n =
(\sigma^2(n\Delta), z(n\Delta))$ where $\Delta$ is the time interval and in
this example it set to constant $1$. 

We consider three algorithms. First, the SM$\textrm{C}^2$ algorithm with $N =
1000$ and $N_x = 1000$. Second, the SM$\textrm{C}^2$FW algorithm with the same number
of particles. The same Normal KDE approximation is used as in the last
example. Various bandwidths of the kernel were considered, and $h = 0.01$ is
chosen. The last, the idea coupled with the parallel particle filter
algorithm (PPF) \cite{chan}, instead of the KDE
approximation, is considered. The latter two algorithms use a window width $T =
200$. We found that further increasing the window width does not improve
results in a significant manner.

The algorithms again use Normal kernels on logarithm scales for the PMMH
proposals. Using results from the particle filter paper, we believe that
$\kappa$ and $\delta$ are strongly correlated and they are updated in one
block of PMCMC move while $\gamma$ and $\lambda$ are updated individually in
their own block. The proposal scales are calibrated on-line using the moments
estimates from the sampler.

\subsubsection{Results}

In Figure~\ref{fig:levy_avg} we show the average of estimates over 20
simulations. All three algorithms give similar results. Compared to results
from the SM$\textrm{C}^2$ algorithm, the SM$\textrm{C}^2$FW algorithm give slightly better
result than that of the SM$\textrm{C}^2$FW-PPF one.

\begin{figure}[t]
  \includegraphics[width=\linewidth]{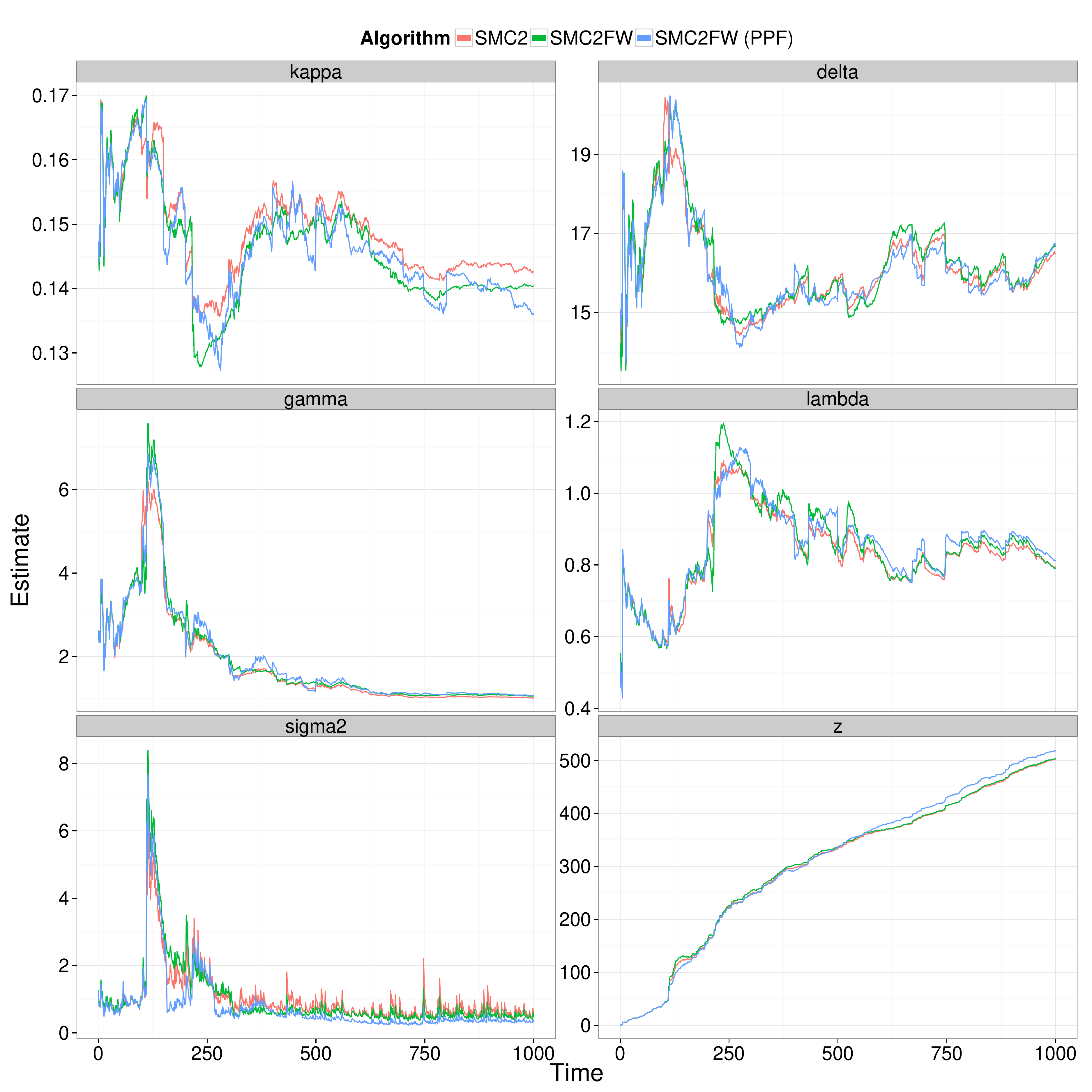}
  \caption{Averages of estimates for the \levy-driven stochastic model using
    S\&P 500 data.}
  \label{fig:levy_avg}
\end{figure}

One feature that was mentioned in Section \ref{sec:theory}, but
not shown clearly in the early simple Gaussian linear example, is that the
bias of the online algorithms does not accumulate over time. For instance,
consider the $\sigma^2$ state, though non-trivial errors can be observed for
both on-line algorithms in the early time steps, they were not carried on to
later times.

In addition to the estimates, prediction of the variance (square of
volatility), $\sigma^2(n\Delta)$ is also calculated as $\Exp[X_n|y_{1:n-1}] =
\Exp[\Exp[X_n|X_{n-1}]|y_{1:n-1}]$. Since the transition density is not of a
closed form, its expectation is estimated with 100 samples of $X_n$ generated
for each value of $X_{n-1}$ in the particle system. The outer expectation
is approximated with the particle system at time $n - 1$. The results for the three algorithms are plotted
in Figure~\ref{fig:levy_vol} against the squared log-returns.


\begin{figure}[t]
  \includegraphics[width=\linewidth]{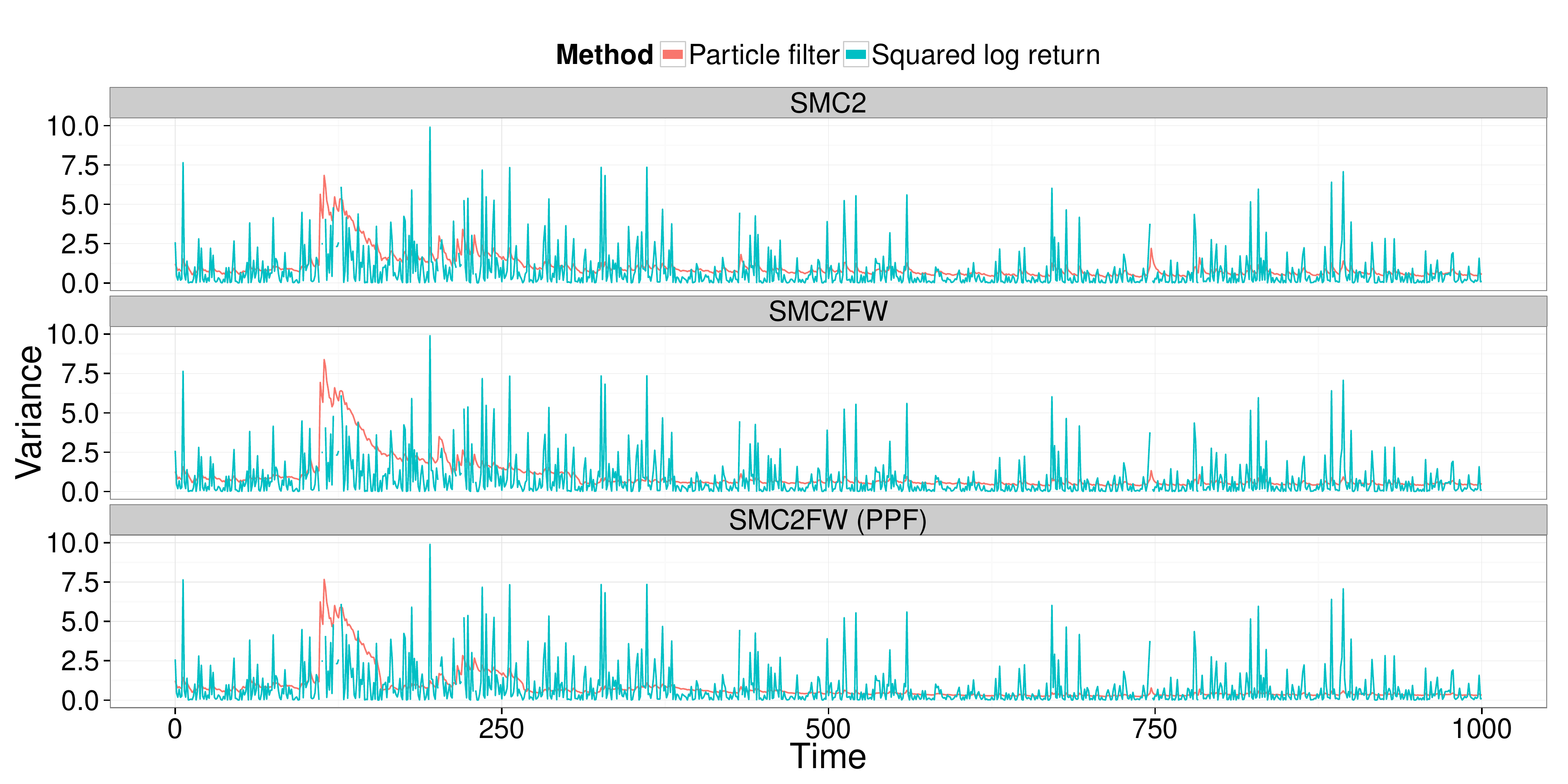}
  \caption{Prediction of variance for the \levy-driven stochastic model using
    S\&P 500 data}
  \label{fig:levy_vol}
\end{figure}


\subsubsection{Summary}

Through two examples, it is shown that with some careful choice of the KDE
bandwidth, or using the parallel particle filter algorithm, which does not
require this layer of tuning and an appropriate window width, it is possible
to obtain results competitive to the SM$\textrm{C}^2$ algorithm with only a
fraction of computational cost. In the \levy-driven stochastic model example,
the theoretical feature that the bias does not accumulate over time is demonstrated.

\section{Summary}\label{sec:summary}

In this article we have presented a method for Bayesian online static parameter estimation for state-space models. Our method is such that
the computational cost does not grow with the time parameter and moreover, the algorithm can be shown to be consistent, although, the cost is
then exponential in the dimension of the static parameter. We have additionally shown that the asymptotic bias, under strong assumptions, does not grow with time.

There are several avenues for future work. First is the use of alternative approximation schemes in our algorithm; we have relied on kernel density estimation, but there are other schemes
which could be used. Second, in our work, we have investigated the asymptotic bias. However, as is clear in the proofs, then the consecutive blocks are independent and one thus expects that
the study of the finite sample bias is significantly more challenging; an investigation of this is warranted.

\subsubsection*{Acknowledgements}
This research was supported by a Singapore Ministry of Education Academic Research Fund Tier 1 grant (R-155-000-156-112).
We thank Alex Beskos \& Alex Thiery for many useful conversations on this work.

%
%

\appendix

\section{Technical Results: Consistency}\label{app:a}

\begin{proof}[Proof of Theorem \ref{theo:consis}]
For $n\in\{0,\dots,T\}$ the result follows by standard theory; see \cite{crisan,delmoral1,delmoral2}. We first consider
$\check{\eta}_{T,T}^{N}(K_N(\theta)f_{\theta}(x_{T+1}|\cdot))$. Denote by $\mathscr{F}_{T}$ the $\sigma-$algebra generated
by the particle system up-to time $T$ and expectations w.r.t.~the law of the simulated algorithm as $\mathbb{E}$. Then we have 
$$
\check{\eta}_{T,T}^{N}(K_N(\theta)f_{\theta}(x_{T+1}|\cdot)) - \zeta(\theta,x_{T+1}) = 
$$
$$
\check{\eta}_{T,T}^{N}(K_N(\theta)f_{\theta}(x_{T+1}|\cdot)) - \mathbb{E}[\check{\eta}_{T,T}^{N}(K_N(\theta)f_{\theta}(x_{T+1}|\cdot))|\mathscr{F}_{T}]
+ \mathbb{E}[\check{\eta}_{T,T}^{N}(K_N(\theta)f_{\theta}(x_{T+1}|\cdot))|\mathscr{F}_{T}] -
\zeta(\theta,x_{T+1}).
$$
For the first term on the R.H.S.~one has by the (conditional) Marcinkiewicz-Zygmund inequality (conditional on $\mathscr{F}_{T}$ the samples are generated independently):
$$
\mathbb{E}[(\check{\eta}_{T,T}^{N}(K_N(\theta)f_{\theta}(x_{T+1}|\cdot)) - \mathbb{E}[\check{\eta}_{T,T}^{N}(K_N(\theta)f_{\theta}(x_{T+1}|\cdot))|\mathscr{F}_{T}])^2] \leq \frac{C'}{\sqrt{N}}
$$
where $C'$ depends on $C$ in (A\ref{hyp:3}); thus $\check{\eta}_{T,T}^{N}(K_N(\theta)f_{\theta}(x_{T+1}|\cdot)) - \mathbb{E}[\check{\eta}_{T,T}^{N}(K_N(\theta)f_{\theta}(x_{T+1}|\cdot))|\mathscr{F}_T]$ converges to zero in probability.

Now, for
$$
\mathbb{E}[\check{\eta}_{T,T}^{N}(K_N(\theta)f_{\theta}(x_{T+1}|\cdot))|\mathscr{F}_{T}] -
\zeta(\theta,x_{T+1}) 
$$
$\mathbb{E}[\check{\eta}_{T,T}^{N}(K_N(\theta)f_{\theta}(x_{T+1}|\cdot))|\mathscr{F}_{T}]=\Phi_T(\eta_{T-1}^N)(K_N(\theta)f_{\theta}(x_{T+1}|\cdot))$, and
the denominator converges in probability and by  the arguments in \cite{crisan}[Theorem 4.1] the numerator will converge in probability to the appropriate quantity; that is
$\Phi_T(\eta_{T-1}^N)(K_N(\theta)f_{\theta}(x_{T+1}|\cdot))- \zeta(\theta,x_{T+1}) \mathbb{P} \rightarrow 0$.

For $\check{\eta}_{T+1,2T}^{N}(\varphi)$, this converges in probability to $\check{\eta}_{T+1,2T}(\varphi)$ by centering by the conditional expectation (given $\mathscr{F}_{T+1}$)
and applying the (conditional) Marcinkiewicz-Zygmund inequality for $\check{\eta}_{T+1,2T}^{N}(\varphi)-\mathbb{E}[\check{\eta}_{T+1,2T}^{N}(\varphi)|\mathscr{F}_{T+1}]$.
The term $\mathbb{E}[\check{\eta}_{T+1,2T}^{N}(\varphi)|\mathscr{F}_{T+1}]$ will converge to $\check{\eta}_{T+1,2T}(\varphi)$ using the conditional i.i.d.~property
and the fact that $\check{\eta}_{T,T}^{N}(K_N(\theta)f_{\theta}(x_{T+1}|\cdot))$ converges in probability to $\zeta(\theta,x_{T+1})$.

For $\check{\eta}_{T+2,2T}^{N}(\varphi)$, we consider:
$$
\check{\eta}_{T+2,2T}^{N}(\varphi) - \Phi_{T+2,2T,\hat{\pi}_{T+1,2T}}(\check{\eta}_{T+1,2T}^{N})(\varphi)  +
\Phi_{T+2,2T,\hat{\pi}_{T+1,2T}}(\check{\eta}_{T+1,2T}^{N})(\varphi) -\check{\eta}_{T+2,2T}(\varphi).
$$
The first term is dealt with via the (conditional) Marcinkiewicz-Zygmund inequality as above. For 
$\Phi_{T+2,2T,\hat{\pi}_{T+1,2T}}(\check{\eta}_{T+1,2T}^{N})(\varphi) -\check{\eta}_{T+2,2T}(\varphi)$
as $\check{\eta}_{T+1,2T}^{N}(\check{G}_{T+1})$ converges in probability to 
$\check{\eta}_{T+1,2T}(\check{G}_{T+1})$, we need only consider
$$
\check{\eta}_{T+1,2T}^{N}(\check{G}_{T+1}\check{M}_{T+2,2T,\hat{\pi}_{T+1,2T}}(\varphi)) -
\check{\eta}_{T+1,2T}(\check{G}_{T+1}\check{M}_{T+2,2T,\pi_{T+1,2T}}(\varphi)) = 
$$
$$
\check{\eta}_{T+1,2T}^{N}(\check{G}_{T+1}\check{M}_{T+2,2T,\hat{\pi}_{T+1,2T}}(\varphi)) - 
\check{\eta}_{T+1,2T}^{N}(\check{G}_{T+1}\check{M}_{T+2,2T,\pi_{T+1,2T}}(\varphi)) +
$$
$$
\check{\eta}_{T+1,2T}^{N}(\check{G}_{T+1}\check{M}_{T+2,2T,\pi_{T+1,2T}}(\varphi)) -
\check{\eta}_{T+1,2T}(\check{G}_{T+1}\check{M}_{T+2,2T,\pi_{T+1,2T}}(\varphi))
$$
The last term on the R.H.S.~converges to zero by the above calculations. So we focus on the first term on the R.H.S.~we have
$$
\mathbb{E}[|\check{\eta}_{T+1,2T}^{N}(\check{G}_{T+1}\check{M}_{T+2,2T,\hat{\pi}_{T+1,2T}}(\varphi)) - 
\check{\eta}_{T+1,2T}^{N}(\check{G}_{T+1}\check{M}_{T+2,2T,\pi_{T+1,2T}}(\varphi))|] \leq
$$
$$
C\mathbb{E}[|[\check{M}_{T+2,2T,\hat{\pi}_{T+1,2T}}(\varphi)(\alpha_{T+1}^1) - \check{M}_{T+2,2T,\pi_{T+1,2T}}(\varphi)(\alpha_{T+1}^1)]|]
$$
Let $\epsilon>0$ be given. By (A\ref{hyp:4}) there exists $\delta>0$ independent of $\alpha_{T+1}^1$ such that for any probability density $\eta$ with $|\eta-\pi_{T+1,2T}|<\delta$ we have that 
$|\check{M}_{T+2,2T,\hat{\pi}_{T+1,2T}}(\varphi)(\alpha_{T+1}^1) - \check{M}_{T+2,2T,\pi_{T+1,2T}}(\varphi)(\alpha_{T+1}^1)|<\epsilon/2$. Consider the event:
$$
A(N,\delta) = \{\,|\hat{\pi}_{T+1,2T}-\pi_{T+1,2T}|<\delta\, \}\ .
$$
Then 
$$
\mathbb{E}[|[\check{M}_{T+2,2T,\hat{\pi}_{T+1,2T}}(\varphi)(\alpha_{T+1}^1) - \check{M}_{T+2,2T,\pi_{T+1,2T}}(\varphi)(\alpha_{T+1}^1)]|] =
$$
$$
\mathbb{E}[|[\check{M}_{T+2,2T,\hat{\pi}_{T+1,2T}}(\varphi)(\alpha_{T+1}^1) - \check{M}_{T+2,2T,\pi_{T+1,2T}}(\varphi)(\alpha_{T+1}^1)]|\mathbb{I}_{A(N,\delta)}] +
$$
$$
\mathbb{E}[|[\check{M}_{T+2,2T,\hat{\pi}_{T+1,2T}}(\varphi)(\alpha_{T+1}^1) - \check{M}_{T+2,2T,\pi_{T+1,2T}}(\varphi)(\alpha_{T+1}^1)]|\mathbb{I}_{A(N,\delta)^c}] \leq
$$
$$
\epsilon/2 + C\mathbb{P}(A(N,\delta)^c).
$$
By the convergence in probability of $\check{\eta}_{T,T}^{N}(K_N(\theta)f_{\theta}(x_{T+1}|\cdot))$ here is an $N_0\geq 1$ such that for each $N\geq N_0$ we have $2 C\cdot \mathbb{P}\,[\,A(N,\delta)^c\,]\leq \frac{\epsilon}{2}$. Hence, for any $N\geq N_0$:
$$
\mathbb{E}[|\check{G}_{T+1}(\alpha_{T+1}^1)[\check{M}_{T+2,2T,\hat{\pi}_{T+1,2T}}(\varphi)(\alpha_{T+1}^1) - \check{M}_{T+2,2T,\pi_{T+1,2T}}(\varphi)(\alpha_{T+1}^1)]|] < C\epsilon
$$
and as $\epsilon>0$ was arbitrary, the term of interest goes to zero in $\mathbb{L}_1$; this completes the proof. The proof can also be repeated if one considers a $K_N(\theta-\theta')$ as part of the function (the argument is almost the same).
The proofs at subsequent times follow the above arguments and are omitted for brevity.
\end{proof}

\section{Proofs for Bias} \label{app:bias}

In the context of the proof for the bias, we need only consider one block (as will become apparent in the proof), as blocks are independent in the asymptotic bias. In addition, one significantly simplify the notations
by simply considering two Feynman-Kac formula of $T$ steps, with different initial distributions, the same potentials and different Markov kernels on measurable spaces $(E_0,\mathcal{E}_0),\dots,(E_T,\mathcal{E}_T)$.
Thus, for $k\in\{1,2\}$ the two Feynman-Kac $n-$time marginals:
$$
\eta_n^k(dx_n) = \frac{\gamma_n^k(dx_n)}{\gamma_n^k(1)}
$$
with 
$$
\gamma_n^k(dx_n) = \int_{\mathbb{R}^{d(n-1)}} \Big\{\prod_{p=0}^{n-1} G_p(x_p) M_{p+1}^k(x_p,dx_{p+1}) \Big\}\eta_0^k(dx_0).
$$
This corresponds to our case, as the potentials are the same, with the Markov kernels and initial distributions different.
Our proofs will depend a lot on the Bayes rule, which we now recall, for $\mu\in\mathscr{P}(E_{p-1})$ 
$$
\Phi_p^k(\mu)(dx) = \frac{\mu(G_{p-1}M_p^k(dx))}{\mu(G_{p-1})}.
$$
We use the notation $\Phi_{p,q}^k(\mu) := \Phi_q^k\circ  \Phi_q^k\circ\cdots \circ\Phi_{p+1}^k(\mu)$, $q\geq p\geq 0$ (with the convention when $p=q$, one returns $\mu$). Our assumption (A\ref{hyp:bias}) under the modified notation is

\begin{itemize}
\item{There exist a $\delta\in[1,\infty)$ such that for every $p\geq 0$
$$
\sup_{x,y\in E_p}\frac{G_p(x)}{G_p(y)} \leq \delta.
$$
}
\item{There exist a $\epsilon\in(0,1)$ and for $T-1\geq p\geq 1$, $\nu\in\mathscr{P}(E_p)$ such that for each $k\in\{1,2\},T-1\geq p\geq 1$ every $x,y\in E_{p-1}$
$$
M_p^k(x,\cdot) \geq \epsilon \Big(M_p^k(y,\cdot)\vee\nu(\cdot)\Big).
$$
}
\end{itemize}
Recall that the final Markov kernel is a Dirac measure.

\begin{proof}[Proof of Theorem \ref{theo:bias}]
We have that, under our modified notations
$$
\mathsf{B}(T,\varphi) = |\Phi_T^1(\eta_0^1)(\varphi)-\Phi_T^2(\eta_0^2)(\varphi)| \leq |\Phi_T^1(\eta_0^1)(\varphi)-\Phi_T^1(\eta_0^1)(\varphi)| + |\Phi_T^1(\eta_0^2)(\varphi)-\Phi_T^2(\eta_0^2)(\varphi)|
$$
By Lemma \ref{lem:forget}
$$
|\Phi_T^1(\eta_0^1)(\varphi)-\Phi_T^1(\eta_0^1)(\varphi)|
\leq \|\varphi\|_{\infty} 4\Big(\frac{\delta}{\epsilon}\Big)^2 (1-\epsilon^2)^{T-1}.
$$
Then
$$
|\Phi_T^1(\eta_0^2)(\varphi)-\Phi_T^2(\eta_0^2)(\varphi)| = |\Phi_{T}^2(\Phi_{T-1}^1(\eta_0^2))(\varphi) -  \Phi_{T}^2(\Phi_{T-1}^2(\eta_0^2))(\varphi)|.
$$
as $\Phi_{T}^1(\mu)(\varphi)=\Phi_{T}^2(\mu)(\varphi)$ for any $\mu\in\mathscr{P}(E_{T-1})$.  Now
\begin{eqnarray*}
\Phi_{T}^2(\Phi_{T-1}^1(\eta_0^2))(\varphi) -  \Phi_{T}^2(\Phi_{T-1}^2(\eta_0^2))(\varphi)  & = &
\frac{[\Phi_{T-1}^1(\eta_0^2)-\Phi_{T-1}^2(\eta_0^2)](G_{T-1}\varphi)}{\Phi_{T-1}^1(\eta_0^2)(G_{T-1})} +\\ & & \frac{\Phi_{T-1}^2(\eta_0^2)(G_{T-1}\varphi)}{\Phi_{T-1}^1(\eta_0^2)(G_{T-1})\Phi_{T-1}^2(\eta_0^2)(G_{T-1})}\times \\ & &
[\Phi_{T-1}^2(\eta_0^2)(G_{T-1})-\Phi_{T-1}^1(\eta_0^2)(G_{T-1})].
\end{eqnarray*}
Then, by application of Proposition \ref{prop:bias} along with (A\ref{hyp:bias}) it follows that
$$
|\Phi_{T}^2(\Phi_{T-1}^1(\eta_0^2))(\varphi) -  \Phi_{T}^2(\Phi_{T-1}^2(\eta_0^2))(\varphi)| \leq 
$$
$$
4\delta^2\|\varphi\|_{\infty}\Big(\frac{\delta^2}{\epsilon^4}\Big) \Big(1-(1-\epsilon^2)^{T-1}\Big) +
4\delta^4\|\varphi\|_{\infty}\Big(\frac{\delta^2}{\epsilon^4}\Big) \Big(1-(1-\epsilon^2)^{T-1}\Big)
$$
which allows one to conclude the proof.
\end{proof}

\begin{rem}
As one can see from inspection of the proof, the difference in initial distribution does not impact the bound. Moreover, the difference in Markov kernels is controlled, leading to a control of the bias; such a latter property is not obvious \emph{a priori} and
needs to be proved. As is evident from the proof, it does not matter which block one considers, under our assumptions.
\end{rem}

\begin{prop}\label{prop:bias}
Assume (A\ref{hyp:bias}). Then for any $\mu\in\mathscr{P}(E_0)$
$$
\|\Phi^1_{T-1}(\mu)-\Phi^2_{T-1}(\mu)\|_{\textrm{\emph{tv}}} \leq 2 \Big(\frac{\delta^2}{\epsilon^4}\Big) \Big(1-(1-\epsilon^2)^{T-1}\Big)
$$
where $\epsilon,\delta$ are as in (A\ref{hyp:bias}).
\end{prop}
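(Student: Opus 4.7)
The plan is to bound the total variation distance via a standard one-step-at-a-time swap between the two Feynman-Kac semi-groups, exploiting the exponential stability of $\Phi^2$ that (A\ref{hyp:bias}) provides. For $0 \leq p \leq T-1$, set $\eta_p := \Phi^1_{0,p}(\mu)$. Using the semi-group property $\Phi^k_{0,n} = \Phi^k_{m,n}\circ\Phi^k_{0,m}$ for both $k\in\{1,2\}$ and the conventions $\Phi^1_{0,0}=\Phi^2_{T-1,T-1}=\mathrm{id}$, I first derive the telescoping identity
\begin{equation*}
\Phi^1_{T-1}(\mu) - \Phi^2_{T-1}(\mu) = \sum_{p=0}^{T-2}\Big[\Phi^2_{p+1,T-1}\big(\Phi^1_{p,p+1}(\eta_p)\big) - \Phi^2_{p+1,T-1}\big(\Phi^2_{p,p+1}(\eta_p)\big)\Big].
\end{equation*}
Each summand isolates the effect of applying one step of $\Phi^1$ dynamics (rather than $\Phi^2$) at time $p$, after which the remaining $T-2-p$ steps are the common $\Phi^2$ flow.

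The next step is to bound each summand in total variation by combining two ingredients. First, under (A\ref{hyp:bias}), the classical Del Moral stability estimate (the same ingredient underlying Lemma \ref{lem:forget}, applied now to an intermediate starting time) yields the TV contraction
\begin{equation*}
\|\Phi^2_{p+1,T-1}(\lambda_1)-\Phi^2_{p+1,T-1}(\lambda_2)\|_{\textrm{tv}} \leq \frac{\delta^2}{\epsilon^2}(1-\epsilon^2)^{T-2-p}\|\lambda_1-\lambda_2\|_{\textrm{tv}}
\end{equation*}
for any $\lambda_1,\lambda_2 \in \mathscr{P}(E_{p+1})$. Second, the trivial bound between two probability measures gives $\|\Phi^1_{p,p+1}(\eta_p)-\Phi^2_{p,p+1}(\eta_p)\|_{\textrm{tv}}\leq 2$. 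Combining, the $p$-th summand is bounded by $2(\delta^2/\epsilon^2)(1-\epsilon^2)^{T-2-p}$.

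Summing the resulting geometric series over $p=0,\dots,T-2$ gives
\begin{equation*}
\|\Phi^1_{T-1}(\mu)-\Phi^2_{T-1}(\mu)\|_{\textrm{tv}} \leq \frac{2\delta^2}{\epsilon^2}\sum_{j=0}^{T-2}(1-\epsilon^2)^j = \frac{2\delta^2}{\epsilon^4}\Big(1-(1-\epsilon^2)^{T-1}\Big),
\end{equation*}
which is precisely the claimed bound.

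The hardest part is establishing the stability estimate with exactly these constants. This is the same mechanism underlying Lemma \ref{lem:forget}: under (A\ref{hyp:bias}) the unnormalized Feynman-Kac kernel $Q^2_{p+1,T-1}$ satisfies a uniform bound on the mass ratio $Q^2_{p+1,T-1}(x,1)/Q^2_{p+1,T-1}(y,1)$ controlled by $\delta^2$ (from the potential ratio), while the associated normalized Markov kernel has Dobrushin coefficient decaying as $(1-\epsilon^2)^{T-2-p}$ via the standard two-step coupling argument exploiting the minorization. Once that stability bound is in hand, the telescoping identity and the geometric summation are purely routine bookkeeping.
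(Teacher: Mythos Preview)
Your proposal is correct and follows essentially the same route as the paper: a telescoping decomposition that swaps one step of the flow at a time, followed by the Feynman--Kac stability estimate (Lemma \ref{lem:forget}) on the remaining steps and a geometric summation. The only cosmetic differences are that the paper telescopes with $\Phi^1$ as the ``remaining'' semigroup and $\Phi^2$ as the ``initial'' one (the roles are reversed in your write-up, which is immaterial by symmetry), and that the paper invokes Lemma \ref{lem:tech_res_bias} to bound the one-step discrepancy $\|\Phi^1_{p+1}(\eta)-\Phi^2_{p+1}(\eta)\|_{\textrm{tv}}$ by $(1-\epsilon)$ rather than your trivial bound of $2$; this extra factor of $(1-\epsilon)$ compensates for the factor $2$ carried in the paper's statement of Lemma \ref{lem:forget}, so both routes land on the same constant $2\delta^2/\epsilon^4$. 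Your sketch of the stability constant is slightly loose (the mass-ratio bound is $\delta/\epsilon$ rather than $\delta^2$, cf.\ the proof of Lemma \ref{lem:forget}), so if you want the displayed constant exactly you should either quote Lemma \ref{lem:forget} verbatim and pair it with the $(1-\epsilon)$ one-step bound, or justify the sharper contraction constant you assert.
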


\begin{proof}
We have the standard telescoping sum, for $\varphi:E_{T-1}\rightarrow[0,1]$
$$
[\Phi^1_{T-1}(\mu)-\Phi^2_{T-1}(\mu)](\varphi) 
$$
$$
= \sum_{k=0}^{T-1}\Big [\Phi^1_{k,T-1}\Big(\Phi_{k}^2(\mu)\Big) - \Phi^1_{k+1,T-1}\Big(\Phi_{k+1}^2(\mu)\Big)\Big](\varphi).
$$
Now, by Lemma \ref{lem:forget}
$$
\Big\|\Phi^1_{k,T-1}\Big(\Phi_{k}^2(\mu)\Big) - \Phi^1_{k+1,T-1}\Big(\Phi_{k+1}^2(\mu)\Big)\Big\|_{\textrm{tv}} \leq 
$$
$$
2 \Big(\frac{\delta}{\epsilon}\Big)^2 (1-\epsilon^2)^{T-k-2}  \Big\|\Phi^1_{k+1}\Big(\Phi_{k}^2(\mu)\Big) - \Phi^2_{k+1}\Big(\Phi_{k}^2(\mu)\Big) \Big\|_{\textrm{tv}}.
$$
By Lemma \ref{lem:tech_res_bias}
$$
\Big\|\Phi^1_{k+1}\Big(\Phi_{k}^2(\mu)\Big) - \Phi^2_{k+1}\Big(\Phi_{k}^2(\mu)\Big) \Big\|_{\textrm{tv}} \leq (1-\epsilon)
$$
so we have proved that
$$
\|\Phi^1_{pl,(p+1)l}(\mu)-\Phi^2_{pl,(p+1)l}(\mu)\|_{\textrm{tv}} \leq 2 \Big(\frac{\delta}{\epsilon}\Big)^2 (1-\epsilon) \sum_{k=0}^{T-1} (1-\epsilon^2)^{T-k-2},
$$
from which one can easily conclude.
\end{proof}

\begin{lem}\label{lem:tech_res_bias}
Assume (A\ref{hyp:bias}). Then for any $T-1\geq p\geq 1$, $\mu\in\mathscr{P}(E_{p-1})$ and $\varphi:\mathbb{R}^d\rightarrow[0,1]$ we have
$$
|[\Phi_{p}^1(\mu)-\Phi_{p}^2(\mu)](\varphi)| \leq (1-\epsilon)
$$
where $\epsilon$ is as in (A\ref{hyp:bias}).
\end{lem}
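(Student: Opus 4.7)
The plan is to exploit the fact that in $\Phi_p^k(\mu)(\varphi) = \mu(G_{p-1}M_p^k(\varphi))/\mu(G_{p-1})$, the denominator does not depend on $k$, so only the numerator differs between the two Feynman--Kac models. Writing out the difference gives
\[
[\Phi_p^1(\mu)-\Phi_p^2(\mu)](\varphi) \;=\; \frac{\mu\bigl(G_{p-1}\,[M_p^1(\varphi)-M_p^2(\varphi)]\bigr)}{\mu(G_{p-1})},
\]
so the task reduces to a pointwise bound on $|M_p^1(x,\varphi)-M_p^2(x,\varphi)|$ for $\varphi:E_p\to[0,1]$ and $x\in E_{p-1}$.

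Here I would invoke the minorization half of (A\ref{hyp:bias}): since $M_p^k(x,\cdot)\ge \epsilon\,\nu(\cdot)$ for both $k\in\{1,2\}$, I can perform the classical Doeblin split
\[
M_p^k(x,\cdot) \;=\; \epsilon\,\nu(\cdot) + (1-\epsilon)\,R_p^k(x,\cdot),
\]
where $R_p^k(x,\cdot) := (M_p^k(x,\cdot)-\epsilon\,\nu(\cdot))/(1-\epsilon)$ is a probability measure on $E_p$. Subtracting the two representations eliminates the common $\epsilon\,\nu$ term, yielding
\[
M_p^1(x,\varphi) - M_p^2(x,\varphi) \;=\; (1-\epsilon)\bigl[R_p^1(x,\varphi) - R_p^2(x,\varphi)\bigr].
\]
Since $\varphi$ takes values in $[0,1]$ and each $R_p^k(x,\cdot)$ is a probability measure, $R_p^k(x,\varphi)\in[0,1]$, so the bracket is at most $1$ in absolute value, giving $|M_p^1(x,\varphi)-M_p^2(x,\varphi)|\le 1-\epsilon$ uniformly in $x$.

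Plugging this uniform bound back into the ratio and using that $G_{p-1}\ge 0$ and $\mu$ is a probability measure, the numerator is bounded by $(1-\epsilon)\mu(G_{p-1})$, which cancels the denominator to give the claimed estimate $|[\Phi_p^1(\mu)-\Phi_p^2(\mu)](\varphi)|\le 1-\epsilon$. There is no real obstacle: the boundedness of $G_{p-1}$ (the first part of (A\ref{hyp:bias})) is not even used here, only the minorization and the fact that the two Bayes operators share their denominator structure. The mild subtlety worth flagging is ensuring that $\epsilon<1$ so that $R_p^k$ is well-defined, which is guaranteed by $\epsilon\in(0,1)$ in (A\ref{hyp:bias}).
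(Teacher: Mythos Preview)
Your proof is correct and follows exactly the same approach as the paper: write the difference as $\mu(G_{p-1}[M_p^1-M_p^2](\varphi))/\mu(G_{p-1})$, apply the Doeblin split $M_p^k(x,\cdot)=\epsilon\,\nu(\cdot)+(1-\epsilon)R_p^k(x,\cdot)$ coming from the minorization in (A\ref{hyp:bias}), and conclude from $|R_p^1(x,\varphi)-R_p^2(x,\varphi)|\le 1$. If anything, you have made the final step more explicit than the paper, which simply says ``from which one easily concludes.''
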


\begin{proof}
We have
$$
[\Phi_{p}^1(\mu)-\Phi_{p}^2(\mu)](\varphi) = \frac{1}{\mu(G_{p-1})}\mu(G_{p-1}[M_p^1-M_p^2](\varphi)).
$$
Now, by (A\ref{hyp:bias}), for any $x,k$ one can write
$$
M_p^k(x,dy) = (1-\epsilon) R_p^k(x,dy) + \epsilon \nu(dy) 
$$
where $R_p^k(x,dy)$ is a Markov kernel:
$$
R_p^k(x,dy) = \frac{M_p^k(x,dy)-\epsilon\nu(dy)}{(1-\epsilon)}.
$$
Thus
$$
[M_p^1-M_p^2](\varphi)(x) = (1-\epsilon)[R_p^1-R_p^2](\varphi)(x)
$$
from which one easily concludes.
\end{proof}

The following Lemma just collects some results of \cite{delmoral1} into a convenient form for use in the above proofs. We make the following defintions for $0\leq s < t$, $k\in\{1,2\}$:
$Q_s^k(x,dy) = G_{s-1}(x) M_s^k(x,dy)$,
$$
Q_{s,t}^k(x_p,dx_n) = \int_{\mathbb{R}^{d(t-s-2}} \prod_{q=s+1}^t Q_q^k(x_{q-1},dx_q)
$$ 
and $P_{s,t}^k(x,dy) = Q_{s,t}^k(x,dy)/Q_{s,t}^k(1)(x)$. For a bounded and measurable real-valued function $\varphi$ we denote
$$
\|\varphi\|_{\textrm{osc}} = \sup_x |\varphi(x)| + \sup_{x,y}|\varphi(x)-\varphi(y)|.
$$

\begin{lem}\label{lem:forget}
Assume (A\ref{hyp:bias}). Then for any $0\leq s < t\leq T-1$ and any $\mu,\rho\in\mathscr{P}(E_{s-1})$, $k\in\{1,2\}$
$$
\|\Phi_{s,t}^k(\mu)-\Phi_{s,t}^k(\rho)\|_{\textrm{\emph{tv}}} \leq 2\Big(\frac{\delta}{\epsilon}\Big)^2 (1-\epsilon^2)^{t-s} \|\mu-\rho\|_{\textrm{\emph{tv}}}
$$
where $\epsilon,\delta$ are as in (A\ref{hyp:bias}).
\end{lem}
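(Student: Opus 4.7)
The plan is to reduce the statement to the standard Feynman-Kac stability machinery of \cite{delmoral1}. Since both $k\in\{1,2\}$ satisfy the same potential and mixing bounds in (A\ref{hyp:bias}), I fix $k$ and drop the superscript throughout. First, I would use the Bayes decomposition: for any bounded $\varphi$ and any $\mu\in\mathscr{P}(E_{s-1})$,
$$
\Phi_{s,t}(\mu)(\varphi) \;=\; \frac{\mu(Q_{s,t}\varphi)}{\mu(Q_{s,t}(1))} \;=\; \Psi_{h_{s,t}}(\mu)(P_{s,t}\varphi),
$$
where $h_{s,t}(x):=Q_{s,t}(1)(x)$ and $\Psi_h(\mu)(dx):=h(x)\mu(dx)/\mu(h)$ is the Boltzmann-Gibbs transformation. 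Writing the difference of the flows through this factorisation yields the product bound
$$
\|\Phi_{s,t}(\mu)-\Phi_{s,t}(\rho)\|_{\textrm{tv}} \;\leq\; \beta(P_{s,t})\,\|\Psi_{h_{s,t}}(\mu)-\Psi_{h_{s,t}}(\rho)\|_{\textrm{tv}},
$$
where $\beta(P_{s,t})$ denotes the Dobrushin contraction coefficient of the Feynman-Kac Markov kernel $P_{s,t}$.

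Next, I would control each factor separately. A direct one-line computation gives the Boltzmann-Gibbs bound
$$
\|\Psi_h(\mu)-\Psi_h(\rho)\|_{\textrm{tv}} \;\leq\; \frac{\|h\|_\infty}{\mu(h)\wedge\rho(h)}\,\|\mu-\rho\|_{\textrm{tv}}.
$$
Under (A\ref{hyp:bias}), the potential ratio bound $\delta$ and the two-sided $\epsilon$-minorisation on $M_p$ propagate through the iterated kernel $Q_{s,t}$: a short induction on the block length $t-s$ produces the uniform ratio
$$
\sup_x h_{s,t}(x)\,/\,\inf_x h_{s,t}(x) \;\leq\; (\delta/\epsilon)^2.
$$
For the Dobrushin factor, the two-sided mixing $M_p(x,\cdot)\geq \epsilon M_p(y,\cdot)$ combined with the potential ratio $\delta$ is precisely the framework of the classical contraction theory in \cite{delmoral1}, which yields the geometric decay $\beta(P_{s,t})\leq (1-\epsilon^2)^{t-s}$.

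Combining the two bounds with the convention $\|\mu-\rho\|_{\textrm{tv}}=\sup_{\|\varphi\|_\infty\leq 1}|\mu(\varphi)-\rho(\varphi)|$ (which is where the leading factor of $2$ enters) produces the stated inequality $2(\delta/\epsilon)^2(1-\epsilon^2)^{t-s}\|\mu-\rho\|_{\textrm{tv}}$. The main obstacle is the Dobrushin estimate on $P_{s,t}$: because $P_{s,t}$ is a Bayes-twisted composition, not a plain product of the $M_p$'s, the $(1-\epsilon^2)$-per-step contraction does not follow directly from the $\epsilon$-minorisation of the $M_p$ alone but requires the sharp argument of \cite{delmoral1} showing that the normalised one-step Feynman-Kac kernel inherits an $\epsilon^2$-type minorisation (one $\epsilon$ from the Markov mixing and a second $\epsilon$ absorbed through the potential normalisation). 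Since the Lemma is announced explicitly as a repackaging of those results, the remainder of the proof reduces to checking that (A\ref{hyp:bias}) matches the hypotheses of the relevant stability theorem in \cite{delmoral1} and that the constants align as in the statement.
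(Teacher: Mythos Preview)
Your approach is essentially the same as the paper's. The paper quotes \cite[Theorem~4.3.1]{delmoral1} to obtain
\[
\|\Phi_{s,t}^k(\mu)-\Phi_{s,t}^k(\rho)\|_{\textrm{tv}}\le \beta(P_{s,t}^k)\,\frac{\|Q_{s,t}^k(1)\|_{\textrm{osc}}}{\mu(Q_{s,t}^k(1))\wedge\rho(Q_{s,t}^k(1))}\,\|\mu-\rho\|_{\textrm{tv}},
\]
which is exactly the Boltzmann--Gibbs/Dobrushin factorisation you wrote out by hand via $\Phi_{s,t}(\mu)=\Psi_{h_{s,t}}(\mu)P_{s,t}$; it then bounds $\beta(P_{s,t}^k)\le(1-\epsilon^2)^{t-s}$ from \cite[p.~142]{delmoral1} and the $h$-ratio from \cite[Lemma~4.1]{cerou1}, just as you propose. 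One minor bookkeeping difference: the paper gets the sharp one-step ratio $\sup_{x,y}Q_{s,t}^k(1)(x)/Q_{s,t}^k(1)(y)\le\delta/\epsilon$ (not $(\delta/\epsilon)^2$ as you state), and the extra $(\delta/\epsilon)$ and the factor $2$ both come from the $\|\cdot\|_{\textrm{osc}}/(\mu\wedge\rho)$ term rather than from the ratio bound itself; this does not affect the argument, only where the constants are attributed.
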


\begin{proof}
By \cite[Theorem 4.3.1]{delmoral1} 
$$
\|\Phi_{s,t}^k(\mu) - \Phi_{s,t}^k(\rho) \|_{\textrm{tv}} \leq \beta(P_{s,t}^k) \frac{\|Q_{s,t}^k(1)\|_{\textrm{osc}}}{\mu(Q_{s,t}^k(1))\wedge \rho(Q_{s,t}^k(1))} \|\mu-\rho\|_{\textrm{tv}}
$$
where  $\beta(P_{s,t}^k)$ is the Dobrushin coefficient of $P_{s,t}^k$. Then by \cite[pp.~142]{delmoral1} (fourth displayed equation) and (A\ref{hyp:bias})
$$
\beta(P_{s,t}^k) \leq (1-\epsilon^2)^{t-s}.
$$
Note that via (A\ref{hyp:bias}) and \cite[Lemma 4.1]{cerou1} one has
$$
\sup_{x,y}\frac{Q_{s,t}^k(1)(x)}{Q_{s,t}^k(1)(y)} \leq \frac{\delta}{\epsilon}
$$
and as a result via \cite[pp.~138]{delmoral1}, for any pair of probabilities $\mu,\rho\in\mathscr{P}(E_{s-1})$
$$
\frac{\|Q_{s,t}^k(1)\|_{\textrm{osc}}}{\mu(Q_{s,t}^k(1))\wedge \rho(Q_{s,t}^k(1))} \leq 2 \Big(\frac{\delta}{\epsilon}\Big)^2 .
$$
The result thus follows.
\end{proof}

\end{document}